\documentclass[preprint]{elsarticle}
\usepackage{amsgen,amsmath,amstext,amsbsy,amsopn,amsfonts,amssymb, stmaryrd,amsthm}
\usepackage{tcolorbox}
\usepackage[english]{babel}
\usepackage{tikz}
\usetikzlibrary{chains,positioning,scopes,shapes} 
\usepackage{color}
\usepackage{caption}
\usepackage{subcaption}

\usepackage{algorithm,xspace}
\usepackage[noend]{algorithmic}

\newtheorem{theorem}{Theorem}
\newtheorem{proposition}{Proposition}

\newcommand{\remove}[1]{}
\newcommand{\gn}[1]{\texttt{\tt #1}}

\newcommand{\vp}{\varphi}

\usepackage{xspace}



\newcommand{\MGRP}{\textsc{MGR}\xspace}
\newcommand{\MsGRP}{\textsc{MsGR}\xspace}
\newcommand{\CR}{\textsc{CR}\xspace}
\begin{document}
\begin{frontmatter}
  \title{The  Multicolored Graph Realization Problem}

\author[1,4]{Josep D\'{\i}az}\ead{diaz@cs.upc.edu} 
\author[2,3]{ \"{O}znur Ya\c{s}ar Diner}\ead{oznur.yasar@khas.edu.tr}
\author[1,4]{Maria Serna\corref{cor1}}\ead{mjserna@cs.upc.edu}
\author[3,4]{Oriol Serra}\ead{oriol.serra@upc.edu}

\affiliation[1]{organization = {ALBCOM Research Group, Computer Science Department.  Universitat Polit\`{e}cnica de Catalunya}, city = {Barcelona}, country = {Spain}}
\affiliation[2]{organization = {Computer Engineering Department, Kadir Has University}, city = {Istanbul}, , country = {Turkey}}
\affiliation[3]{organization = {Mathematics Department.  Universitat Polit\`{e}cnica de Catalunya}, city = {Barcelona}, country = {Spain}}
\affiliation[4]{organization = {Institut de Matem\`{a}tiques de la UPC-BarcelonaTech (IMTech), Universitat Polit\`{e}cnica de Catalunya},  city = {Barcelona}, country = {Spain}}

\cortext[cor1]{Corresponding author}

\begin{abstract}
We introduce  the \emph{multicolored graph realization} problem (\MGRP). The input to the problem is  a colored graph $(G,\vp)$, i.e., a graph   together with a coloring $\varphi$  on its vertices. We can associate to each colored graph $(G,\vp)$ a \emph{cluster graph}  ($G_\varphi$)  in which, after collapsing to a node all vertices with the same color, we remove multiple edges and self-loops. A set of vertices $S$ is \emph{multicolored} when $S$ has exactly one vertex from each color class.  The problem is  to decide whether there is a multicolored set $S$ such that, after identifying each vertex in $S$ with its color class,   $G[S]$ coincides with  $G_\vp$.
  
The \MGRP  problem is related  to the well known class of generalized network problems, most of which are NP-hard.  For example  the generalized Minimun Spanning Tree problem. \remove{It generalizes the  multicolored clique problem, introduced in the context of parameterized complexity,  which is known to be  $W[1]$-hard, when  parameterized by the number of colors. We are interested in analyzing the parameterized complexity of the problem with respect to different parameterizations.  
}
The  \MGRP problem  is a generalization of the  multicolored clique problem, which is known to be $W[1]$-hard when parameterized by the number of colors. Thus \MGRP  remains $W[1]$-hard, when parameterized by the size of the cluster graph.  This results implies that the \MGRP problem is $W[1]$-hard when parameterized by any graph parameter on $G_\vp$, among those  for treewidth.  In consequence, we look to instances of the problem in which both the number of color classes and the treewidth of $G_\vp$ are unbounded.  We consider three natural such  graph classes: chordal graphs, convex bipartite graphs and 2-dimensional grid graphs.  We show that the \MGRP problem is NP-complete when $G_\vp$ is  either chordal,  biconvex bipartite, complete bipartite or a 2-dimensional grid.   Our hardness results follows from suitable reductions from the 1-in-3 monotone SAT problem. Our  reductions show that the problem remains hard even when the maximum number of vertices in a color class is 3. In the case of the grid, the hardness holds also graphs with  bounded degree. We complement those results by showing combined parameterizations under which  the \MGRP problem  became tractable.  
 \end{abstract}

\begin{keyword}
multicolored  realization problem\sep generalized combinatorial problems \sep parameterized complexity \sep convex bipartite graphs
\end{keyword}

\end{frontmatter}

\section{Introduction}

It is well known that graphs are an important tool to model many systems in different disciplines. In particular
graph partitioning and graph clustering are key techniques  in various areas of Computer Science, Engineering, Biology, Epidemiology, Social Science, etc. 
 For example when dealing with the analysis of large social nets, the modelization of infection spreading,   route planning, community detection in social
networks and high performance computing. In many of these applications large graphs
are partitioned as to control the structural connections among the clusters (the elements of
the partition).  Given a partition of the vertices into clusters,  the   topological notion of  the \emph{graph quotient} provides a way to obtain a cluster graph as a summary of the input graph.  The cluster graph  provides a simpler and compact form of  complex network, extracted from an adequate partition of the data, summarizing the relevant relationships among the clustered data.    Graph quotients have many  applications in the study of data sets containing complex relationships (see for example \cite{Vepstas17}) and have motivated the study  of generalized  optimization problems.

Classical combinatorial optimization problems can be generalized in a natural
way, by considering a related problem relative to a given partition   of the vertices
of the graph. Those \emph{generalized combinatorial optimization problems} have the following primary features:  the graph is given together with a partition of its vertices in clusters and, when 
considering the feasibility constraints of the graph  problem, these
are expressed in relation to the clusters, rather than as individual vertices. Some interesting and intensively studied problems belonging to this category are: the generalized traveling sales person problem \cite{Fisch95,Fisch97,Fisch02}, the generalized vehicle routing problem \cite{GhianiI2000,PopMSP2018}, the partition graph coloring problem, \cite{DemangeMPR2014,DemangeERT2015}, among others. For further references on the category
of generalized combinatorial optimization problems,  we point to
\cite{DrorHC2000,FeremansLL2003,Pop12} and references therein.

Further problems involving graphs and a partition are the \emph{multicolored clique} and the \emph{multicolored independent set} \cite{pietrzak2003,FellowsTCS2009}. In this formalism a partition is seen as a coloring (not necessarily  proper). The goal  of the problems is to select a \emph{multicolored set},  having a vertex of each color,  that induces a clique or an independent set respectively.   The \emph{multicolored clique} problem has been studied  from the parameterized complexity point of view \cite{FellowsTCS2009}. The problem is known to be W[1]-hard, when parameterized by the number of colors, i.e., the number of sets in  the partition.

In this paper we introduce another generalized combinatorial problem, the \emph{multicolored graph realization problem} (\MGRP):  given a graph together with a partition of its vertices (a colored graph), decide  whether there is a multicolored set inducing the cluster graph, i.e., the quotient graph with respect to the given partition.  For example, in data analysis applications, the problem is equivalent to  asking whether we can obtain particular data fulfilling all the inferred relations. 
The \MGRP problem is solvable   in $O(n^k poly(n))$ time, which is polynomial  when the number of colors $k$ is a constant.  But, it is W[1]-hard parameterized by the number of colors, as it includes the multicolored clique problem.   
Observe that, under this parameterization, the cluster graph has constant size and therefore all graph parameters on the cluster graph are constant.

We are interested in analyzing the complexity of the \MGRP problem  when both the number of colors and the treewidth of the cluster graph is unbounded. Our first result, based on the complexity of the Multicolored Clique Problem, is stated as follows (see Section \ref{sec:2} for the appropriate definitions and terminology).
	
\begin{theorem} The MGR problem is W[1]-hard when parameterized by the number of colors or parameterized by the treewidth of the cluster graph .
\end{theorem}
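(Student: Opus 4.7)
The plan is to reduce the Multicolored Clique problem to \MGRP in a parameter-preserving way, exploiting the observation that \MGRP specializes to Multicolored Clique precisely when the cluster graph is complete. Let $(G,\vp)$ be an instance of Multicolored Clique with $k$ color classes; we wish to decide whether $G$ contains a multicolored clique of size $k$.

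First, I would observe that the cluster graph $G_\vp$ is always a subgraph of $K_k$, with an edge between two color classes iff $G$ contains at least one edge between them. As a polynomial-time preprocessing step, check whether $G_\vp = K_k$: if some pair of classes has no edge between them, no multicolored clique can exist and we output a trivial NO-instance of \MGRP. Otherwise, I claim that $(G,\vp)$, viewed as an instance of \MGRP, is a YES-instance iff $(G,\vp)$ is a YES-instance of Multicolored Clique. Indeed, in this case \MGRP asks for a multicolored set $S$ with $G[S]=G_\vp=K_k$, which is exactly the condition that $S$ is a multicolored $k$-clique.

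The number of colors of the produced \MGRP instance equals $k$, so W[1]-hardness parameterized by the number of colors follows immediately from the corresponding result for Multicolored Clique \cite{FellowsTCS2009}. For the treewidth parameterization, note that in the constructed instance $G_\vp=K_k$, whose treewidth equals $k-1$. Hence any FPT algorithm for \MGRP parameterized by $\operatorname{tw}(G_\vp)$ would yield an FPT algorithm for Multicolored Clique parameterized by $k$, again contradicting the W[1]-hardness of the latter. In fact, since $K_k$ simultaneously maximizes essentially every standard graph parameter (clique number, vertex cover number, pathwidth, cliquewidth, etc.) as a function of $k$, the same reduction rules out FPT algorithms parameterized by any such parameter of $G_\vp$.

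The reduction itself is routine; the only conceptual point is recognizing that \MGRP collapses to Multicolored Clique when $G_\vp$ is complete, so no combinatorial work is needed beyond the preprocessing check. There is no real obstacle, and the proof fits in a few lines once this observation is in place.
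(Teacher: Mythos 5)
Your proposal is correct and follows essentially the same route as the paper: both identify Multicolored Clique as the special case of \MGRP in which the cluster graph is complete (discarding instances where $G_\vp\neq K_k$ as trivial NO-instances), and both note that the resulting cluster graph $K_k$ has treewidth $k-1$, so the hardness transfers to the treewidth parameterization. Your write-up is, if anything, slightly more explicit about the preprocessing step than the paper's.
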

We next focus on specific classes of graphs for which the complexity of natural problems has been widely studied. The first one is   the class of  chordal graphs which form an intensively studied graph class both within structural graph theory
and within algorithmic graph theory.  Recall that  several problems that are hard on other classes of graphs such as graph coloring may be solved in polynomial time on chordal graphs \cite{Gavril1972}.
We show that the MGR problem is
NP-complete for colored graphs whose cluster graph is chordal (see section \ref{sec:3}). The hardness results also hold in case that the number of vertices in a color class is constant.  
\begin{theorem}   The MGR problem is NP-complete, for colored graphs having a
	chordal cluster graph, even when the cluster size is at most 3.
\end{theorem}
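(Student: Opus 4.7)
The plan is to reduce from 1-in-3 monotone SAT (as indicated in the paper's introduction), using a colored graph whose cluster graph is chordal of cluster size at most $3$. Membership in NP is immediate: given $S$, one checks in polynomial time that $S$ is multicolored and that $G[S]=G_\varphi$, so the work is in the hardness reduction.

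Given a monotone 3-SAT formula with variables $x_1,\dots,x_n$ and clauses $C_1,\dots,C_m$, where $C_j=x_{a_j}\vee x_{b_j}\vee x_{c_j}$, I would build $(G,\vp)$ as follows. For every variable $x_i$ introduce a \emph{variable cluster} $V_i=\{T_i,F_i\}$ of size $2$, intended to encode the truth value of $x_i$. For every clause $C_j$ introduce a \emph{clause cluster} $D_j=\{d_j^{a_j},d_j^{b_j},d_j^{c_j}\}$ of size $3$, where $d_j^{\ell}$ is intended to represent the choice that literal $x_\ell$ is the unique true literal in $C_j$. The edges of $G$ are of two kinds. First, the \emph{clause-variable edges}: for each clause $C_j$ with variables $a,b,c$, put $d_j^{a}\sim T_a,F_b,F_c$; $d_j^{b}\sim F_a,T_b,F_c$; $d_j^{c}\sim F_a,F_b,T_c$. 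Second, the \emph{variable-variable edges}: for every pair $i\neq i'$ make $V_i$ and $V_{i'}$ complete bipartite in $G$ (that is, add all four edges between $\{T_i,F_i\}$ and $\{T_{i'},F_{i'}\}$). Clusters all have size $\le 3$ by construction.

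Next I would verify that the cluster graph $G_\vp$ is chordal. By the variable-variable edges the nodes $V_1,\dots,V_n$ form a clique in $G_\vp$; and each clause node $D_j$ is adjacent in $G_\vp$ precisely to its three variable nodes $V_{a_j},V_{b_j},V_{c_j}$, which already form a triangle inside the variable clique. Hence every $D_j$ is a simplicial vertex of $G_\vp$; removing all $D_j$'s leaves the clique on $\{V_i\}$, which is chordal. Adding back simplicial vertices one by one preserves chordality, so $G_\vp$ is chordal.

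For correctness, I would show the two directions separately. If $\sigma$ is a 1-in-3 satisfying assignment, pick $T_i\in V_i$ when $\sigma(x_i)=\mathrm{true}$ and $F_i$ otherwise, and in every clause $D_j$ pick the unique $d_j^\ell$ where $x_\ell$ is the true literal; the edges in $G[S]$ are then exactly the ones prescribed by $G_\vp$. Conversely, given a valid $S$, define $\sigma(x_i)=\mathrm{true}$ iff $T_i\in S$; for each $j$, the fact that the chosen $d_j^\ell$ must be adjacent to the chosen vertex of $V_a, V_b, V_c$ forces the chosen vertex of $V_\ell$ to be $T_\ell$ and the chosen vertices of the other two clusters to be their $F$-vertices, witnessing that exactly one literal of $C_j$ is true under $\sigma$. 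Cross-clause consistency is free because each $V_i$ contributes a single vertex to $S$.

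The step I expect to require the most care is chordality: a naive reduction in which $G_\vp$ is the clause-variable incidence graph would typically produce induced $4$-cycles whenever two variables co-occur in two clauses, and any attempt to break such cycles by adding edges inside $G$ could spoil either the cluster size bound or the semantics of the gadget. The trick of making the variable clusters pairwise complete in $G$ turns the variable side of $G_\vp$ into a single clique and simultaneously makes every clause cluster simplicial, which is the cleanest way I see to guarantee chordality while keeping all clusters of size at most $3$ and preserving the clause gadget.
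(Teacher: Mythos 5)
Your reduction is correct, and its overall skeleton matches the paper's: both reduce from 1-in-3 Monotone SAT with size-2 variable clusters encoding truth values and size-3 clause clusters containing one vertex per admissible local assignment, with each clause vertex joined exactly to the matching variable vertices. Where you genuinely diverge is in how the variable clusters are interconnected so that the cluster graph is chordal. The paper joins only \emph{consecutive} variable clusters $X_i, X_{i+1}$ by complete bipartite graphs, so the variable part of $G_\vp$ is a path; each clause cluster is then attached to the whole contiguous interval $[j_1,j_3]$ of variable clusters, which forces the extra bookkeeping of connecting every clause vertex to \emph{both} vertices of every intermediate variable cluster (so as not to constrain them), and chordality has to be checked by a short cycle argument since the clause nodes are not simplicial. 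You instead make all variable clusters pairwise complete bipartite, so the variable nodes form a clique in $G_\vp$ and every clause node is simplicial; chordality is then immediate, and no intermediate-variable edges are needed. Your version is cleaner for this theorem and in fact yields a slightly sharper statement (the cluster graph is a split graph). What the paper's interval-based layout buys is reusability: the same construction, with the lower-layer edges deleted, is what drives the convex/biconvex bipartite hardness in the next theorem, whereas a clique of variable clusters does not adapt to that setting.
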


Our second family are the chordal bipartite graphs. In particular, we consider the subclasses of convex bipartite graphs and biconvex bipartite graphs which have been used as a benchmark for complexity of homomorphism  problems, see e.g. \cite{enright2014,HuangJP2015,DiazDSS21}.   We show that the MGR problem is
NP-complete for colored graphs whose cluster graph is biconvex bipartite (see section \ref{sec:3}).

\begin{theorem}   The MGR problem is NP-complete, for colored graph having a
	biconvex bipartite cluster graph, even when the cluster size is at most 3.
\end{theorem}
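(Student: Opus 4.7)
The plan is to establish NP-completeness by a polynomial-time reduction from 1-in-3 monotone SAT, mirroring the strategy used for the chordal case of Theorem 2. Membership in NP is immediate: the multicolored set $S$ itself is a polynomial-size certificate, and verifying $G[S]=G_\vp$ amounts to checking adjacencies cluster-by-cluster against $G_\vp$.

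For the reduction, given a 1-in-3 monotone SAT instance $\phi$ with variables $x_1,\dots,x_n$ and clauses $C_1,\dots,C_m$, I would construct a colored graph $(G,\vp)$ whose cluster graph is bipartite with sides $A$ and $B$. Side $A$ carries a ``variable-occurrence'' cluster $V_{j,k}$ for each slot $k\in\{1,2,3\}$ of each clause $C_j$, populated by two vertices encoding the truth value of that occurrence. Side $B$ carries a ``clause'' cluster $W_j$ per $C_j$ with three vertices (one per literal) indicating which literal acts as the 1-in-3 witness, together with ``equality'' clusters of two vertices each that propagate truth values between successive occurrences of the same variable. The edges of $G$ are specified so that, inside every adjacent pair $(V_{j,k},W_j)$, a ``true'' occurrence vertex is adjacent only to the literal-vertex naming it while a ``false'' one is adjacent only to the other two, and, inside each equality cluster, only matching true/true or false/false pairs of occurrence vertices are joined. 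Every color class then has size at most $3$, as required.

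The biconvex ordering is the heart of the construction. I would list side $A$ as $V_{1,1},V_{1,2},V_{1,3},V_{2,1},\dots,V_{m,3}$, so that each $W_j$ sees a block of exactly three consecutive $A$-clusters. Crucially, I would then \emph{duplicate and reroute} occurrences so that all occurrences of each variable $x_i$ form a contiguous sub-block of $A$, chained by equality clusters inserted between neighboring occurrences; every equality cluster thus sees an interval of length $2$ in $A$. Side $B$ is ordered by walking through the variables in order, writing down each variable's equality-chain immediately followed by the clause clusters attached to its occurrences; a short case check then shows that every $V_{j,k}$ sees at most one clause cluster plus one or two adjacent equality clusters, all of which form an interval in $B$.

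Correctness follows in both directions: a 1-in-3 satisfying assignment induces a realising multicolored set (pick occurrence vertices per the assignment, the clause vertex of the unique true literal, and the matching pair in each equality cluster), and any realising set forces, via the equality chains, a common truth value on all occurrences of each variable and exactly one true literal per clause, yielding a 1-in-3 witness. The main obstacle I anticipate is precisely the simultaneous enforcement of both convexities: equality clusters naturally want to connect arbitrary pairs of occurrences of the same variable, which clashes with the demand that \emph{every} neighborhood on \emph{both} sides be an interval. My remedy is the contiguous-block layout described above, so that each equality cluster links only two adjacent $A$-positions; verifying that this layout is consistent with the $B$-ordering, and that no two clauses or chains force incompatible intervals, is the delicate combinatorial step I would work out in full detail.
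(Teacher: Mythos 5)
Your general strategy (reduction from 1-in-3 monotone SAT, NP membership via the obvious certificate) matches the paper's, but the specific construction has a gap that I do not believe can be repaired as described. You impose two contiguity requirements on the ordering of side $A$ simultaneously: (i) the three occurrence clusters $V_{j,1},V_{j,2},V_{j,3}$ of each clause $C_j$ must be consecutive, so that $N(W_j)$ is an interval; and (ii) all occurrence clusters of each variable $x_i$ must form a contiguous sub-block, so that the equality chain for $x_i$ only links adjacent positions. These are incompatible in general: a clause's three occurrences belong to three \emph{different} variables, so placing them consecutively forces those three variables' blocks to be pairwise adjacent intervals, and once a variable appears in several clauses with different partners this adjacency pattern cannot be realized by a linear order. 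The $B$-side ordering has the same defect: a clause cluster $W_j$ is adjacent to occurrences of three different variables, so it cannot be placed ``immediately after'' each of the three variables' equality chains, and the neighborhood of an occurrence cluster for the second or third variable of $C_j$ (its own equality clusters plus the far-away $W_j$) will not be an interval in $B$. You flag this as ``the delicate combinatorial step,'' but it is precisely where the construction breaks.

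The paper avoids this entirely. It starts from the chordal construction (variable classes $X_i$ with two vertices $v_i^0,v_i^1$, clause classes $Y_j$ with three vertices), deletes the edges among the $X$ classes so the cluster graph becomes bipartite, and then connects every clause vertex $u_j^{\ast}$ to \emph{both} vertices $v_i^0,v_i^1$ of every variable $x_i$ outside $[j_1,j_3]$ (exactly as is already done for intermediate variables not occurring in $C_j$). The cluster graph is then the \emph{complete} bipartite graph, which is trivially biconvex, and all the 1-in-3 logic lives in the internal adjacencies between clause vertices and the $v_i^0/v_i^1$ choices. If you want to salvage your approach, the simplest fix is the same one: stop trying to make the cluster graph sparse, pad every clause and equality cluster with unconstrained all-to-all edges to the remaining classes on the other side, and let completeness give you biconvexity for free.
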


We complement this result showing  that the   {\MGRP} problem belongs to FPT, for  colored graphs having a convex bipartite cluster graph, when parameterized by the size of the clusters and the maximum degree of the non ordered part. 

A third family of bipartite graphs we analyze are $2$--dimensional grid graphs, showing the hardness result in this case as well (see section \ref{sec:4}).

\begin{theorem} The MGR problem is NP-complete, for colored graphs whose
cluster graph is a 2-dimensional grid, even when when the cluster size is 6 and
the input graph has bounded degree. 
\end{theorem}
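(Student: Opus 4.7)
The plan is to reduce from planar 1-in-3 monotone SAT, which is known to be NP-complete, and to embed the resulting incidence structure on a rectangular grid so that the two complicating requirements---cluster size at most 6 and bounded degree in the input graph---are both respected. Given such an instance with variables $x_1,\ldots,x_n$ and clauses $C_1,\ldots,C_m$, I first fix a planar embedding of its variable-clause incidence graph on a grid of polynomial size, replacing each incidence edge by an orthogonal path through previously unused grid cells. The underlying cluster graph $G_\varphi$ will be exactly this grid; the colored graph $G$ is obtained by assigning to each grid node a constant-size color class (at most six vertices) and inserting edges between classes at adjacent grid positions to implement the intended local semantics of that cell.

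The construction uses three families of gadgets, one per cell type. A \emph{wire} cluster carries a Boolean value along a straight segment of the grid; it contains a constant number of internal vertices, including a ``true'' vertex and a ``false'' vertex, and the adjacencies with the previous and next wire cluster are crafted so that in any multicolored set inducing the local grid pattern, the chosen vertex of one wire cluster forces the chosen vertex of the next to represent the same value. A \emph{bend} (or \emph{crossing}) cluster analogously forwards or exchanges values without interference. Finally, a \emph{clause} cluster connects the three wires meeting at a clause cell by admitting an induced grid pattern locally if and only if exactly one of the three incoming wires carries the value true. Throughout, the number of internal vertices per cluster stays within the budget of 6, and every vertex is adjacent only to vertices in its at most four grid-neighboring clusters, so the degree of $G$ remains bounded by a constant.

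Correctness would be proved in two directions. For completeness, given a satisfying 1-in-3 assignment I exhibit a multicolored set by choosing in each wire and bend cluster the vertex encoding the value of the corresponding variable and in each clause cluster the vertex compatible with the unique true literal, then verify that the induced subgraph reproduces exactly the grid edges. Soundness requires that for any multicolored $S$ with $G[S]=G_\varphi$, the choices inside each gadget propagate consistently along every wire and satisfy the 1-in-3 condition at every clause. The argument is local: at each cell the induced neighborhood in $G[S]$ must realize the prescribed grid degree (at most four), so a finite case analysis on the six vertices of the cluster determines the admissible choices, and the interlocking of these constraints yields a well-defined Boolean assignment that is forced to be 1-in-3 satisfying.

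The main obstacle is designing the wire and clause gadgets so that the induced subgraph condition captures the intended constraint without spurious realizations: because $G[S]$ must equal $G_\varphi$ as a graph and not merely contain it, each internal choice has to produce exactly the right set of cross-cluster edges, with neither omissions nor surplus. Achieving this while simultaneously keeping cluster size within six, bounding the degree of $G$, and making wire, bend, crossing and clause gadgets interoperate along the shared grid lines of the embedding is the delicate engineering step; once the local truth tables for each gadget are verified, gluing them into a global reduction and proving both directions becomes routine.
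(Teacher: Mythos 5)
Your proposal follows a genuinely different route from the paper, but it stops short of being a proof. The paper reduces from ordinary (non-planar) 1-in-3 Monotone SAT and makes the cluster graph an $(n+2)\times(2m+1)$ grid in which \emph{every} clause row contains a gadget for \emph{every} variable (\gn{Var-in-Cl} if the variable occurs in the clause, \gn{Var-not-in-Cl} otherwise): the truth value of each variable is broadcast vertically down its column and the choice of which literal is true (100/010/001) is broadcast horizontally along the clause row, with consistency enforced at the intersections. This avoids planarity, routing, bends, crossings and fan-out entirely. Your plan instead starts from planar 1-in-3 SAT and routes wires through the grid, which is a legitimate alternative strategy in principle.

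The gap is that the entire technical content of the theorem --- the explicit adjacency lists of the gadgets and the verification that any multicolored set inducing exactly the grid is forced to propagate a consistent 1-in-3 satisfying assignment --- is deferred in your last paragraph as ``the delicate engineering step.'' That step is the proof; without it nothing is established. Concretely, you do not specify: (i) how a variable's value fans out to the several clauses containing it (a branch gadget, which you never mention, must copy a bit to two outgoing wires at a degree-$3$ routing vertex while still realizing all incident grid edges exactly); (ii) what occupies the grid cells not used by any wire, given that the cluster graph must be the \emph{full} grid and the realization must include every grid edge, so blank cells need always-satisfiable classes adjacent to all vertices of their wire neighbours; (iii) why ``crossing'' gadgets appear at all in a reduction from a planar restriction, and if they are needed, how a single cluster of at most $6$ vertices carries two independent bits while its chosen vertex has exactly the four prescribed grid adjacencies; and (iv) the clause gadget's truth table under the \emph{exact} induced-subgraph condition (no surplus edges), which is the subtle part you correctly identify but do not resolve. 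Until these gadgets are written down and their local case analyses carried out, the reduction is a plausible plan rather than a proof.
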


In view of those results, we analyze the computational complexity of the problem with respect to the size of the color classes. We provide  a complexity dichotomy with respect to this parameter (see section \ref{sec:5}).

\begin{theorem} The MGR problem is NP-complete, for colored graphs with cluster
	size $s\ge 3$, and polynomial time solvable otherwise.
	\end{theorem}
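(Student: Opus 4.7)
The statement is a dichotomy: NP-completeness when the maximum color class has size at least $3$, and polynomial-time solvability when every class has at most two vertices. For the hard direction, I would simply invoke Theorem~2 (or Theorem~3), which already establishes NP-completeness even when every color class has at most $3$ vertices; the same instances certify hardness for every $s\ge 3$, and if one wishes to enforce class sizes exactly equal to $s$ one can inflate short classes with \emph{twins}, i.e., vertex copies sharing a neighborhood, without affecting $G_\vp$ or the set of realizations. Membership in NP is immediate, since a candidate multicolored set $S$ serves as a polynomial-size certificate and the condition $G[S]=G_\vp$ is checkable in polynomial time.

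For the polynomial case $s\le 2$, the plan is to reduce to 2-SAT. For every color class $c$ with two vertices $\{u_c^0, u_c^1\}$, introduce a Boolean variable $x_c$ whose value $a\in\{0,1\}$ encodes the choice of $u_c^a$; singleton classes fix their representative with no variable needed. Each cluster edge $\{c_1, c_2\}\in E(G_\vp)$ is then translated into at most four binary clauses: for every pair $(a_1,a_2)\in\{0,1\}^2$ with $u_{c_1}^{a_1} u_{c_2}^{a_2}\notin E(G)$, add the clause that forbids $(x_{c_1},x_{c_2})=(a_1,a_2)$. When one endpoint of a cluster edge is a singleton, this collapses to a unit clause or to a trivial consistency test that can be checked upfront. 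A multicolored realization exists if and only if the resulting 2-CNF formula $\Phi$ is satisfiable, which is decidable in linear time.

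The key observation underlying the reduction is that cluster \emph{non-edges} impose no constraint at all: by the definition of the quotient, $\{c_1,c_2\}\notin E(G_\vp)$ forces the absence of any edge in $G$ between those two classes, so non-adjacency in $G[S]$ is automatic whatever choice is made. Because of this there is no real obstacle in the argument; only the singleton-case collapse and the automatic handling of non-edges need explicit verification, and both are routine. The dichotomy then follows at once.
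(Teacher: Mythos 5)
Your proposal is correct and follows essentially the same route as the paper: NP-hardness for $s\ge 3$ is inherited from the cluster-size-$3$ hardness construction (the paper pads classes with isolated vertices where you use twins, an immaterial difference), and the case $s\le 2$ is handled by the same reduction to 2-SAT, with one variable per two-element class, unit clauses for singletons, and a binary clause forbidding each non-edge of $G$ lying under a cluster edge. Your explicit remark that cluster non-edges impose no constraint is also the (implicit) reason the paper's encoding is complete.
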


We also show that the \MGRP problem, under the double parameterization cluster size and treewidth of the cluster graph, belongs to FPT.

\setcounter{theorem}{0}
\section{Definitions and preliminaries}\label{sec:2}

 In this section, we provide the  definitions and terminology used in the paper.  We follow notation and basic terminology in graph theory from  Diestel \cite{diestel}.

%
%
\begin{figure}[t]
\begin{center}
\begin{tikzpicture}[scale=0.8,dot/.style={draw,circle,minimum size=1mm,inner sep=0pt,outer sep=0pt}]

\draw (2.5,2.5) node {$y_1$};
\draw (5.5,2.5) node {$y_2$};
\draw (7.5,2.5) node {$y_3$};
\draw (9,2.5) node {$y_4$};

\draw (0,2.5) node {$Y$};
\draw (0,0) node {$X$};

\coordinate [dot,fill =black] (y1) at (2.5,2);
\coordinate [dot,fill =black] (y2) at (5.5,2);
\coordinate [dot,fill =black](y3) at (7.5,2);
\coordinate [dot,fill =black] (y4) at (9,2);

\coordinate [dot,fill =black] (x1) at (1,0.5);
\coordinate [dot,fill =black] (x2) at (2,0.5);
\coordinate [dot,fill =black](x3) at (3,0.5);
\coordinate [dot,fill =black] (x4) at (4,0.5);
\coordinate [dot,fill =black] (x5) at (5,0.5);
\coordinate [dot,fill =black] (x6) at (6,0.5);
\coordinate [dot,fill =black](x7) at (7,0.5);
\coordinate [dot,fill =black](x8) at (8,0.5);
\coordinate [dot,fill =black](x9) at (9,0.5);

\draw (1,0) node {$x_1$};
\draw (2,0) node {$x_2$};
\draw (3,0) node {$x_3$};
\draw (4,0) node {$x_4$};
\draw (5,0) node {$x_5$};
\draw (6,0) node {$x_6$};
\draw (7,0) node {$x_7$};
\draw (8,0) node {$x_8$};
\draw (9,0) node {$x_9$};

\path (y1) edge (x1);
\path (y1) edge (x2);
\path (y1) edge (x3);
\path (y1) edge (x4);
\path (y1) edge (x5);

\path (y2) edge (x4);
\path (y2) edge (x5);
\path (y2) edge (x6);
\path (y2) edge (x7);

\path (y3) edge (x7);
\path (y3) edge (x8);

\path (y4) edge (x5);
\path (y4) edge (x6);
\path (y4) edge (x7);
\path (y4) edge (x8);
\path (y4) edge (x9);

\end{tikzpicture}
\end{center}
\caption{A convex bipartite graph.\label{fig:convex-bgraph}}
\end{figure}
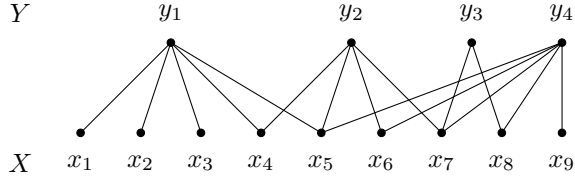

We consider finite, simple and  undirected graphs $G=(V,E)$, i.e.,  without multiple edges or loops.  For $S\subseteq V$, $G[S]$ represents the graph induced by $S$, defined as $G[S]=(S, E\cap\binom{S}{2})$.

A \emph{chordal graph} is one in which all cycles of four or more vertices have a chord. 
A bipartite graph is represented by $G = ( X\cup Y, E)$, where $X$, $Y$ form a bi-partition of the vertex set and $E\subseteq X \times Y$.  Given a  bipartite graph $G = ( X\cup Y, E)$, an ordering of the vertices $X$ has the \emph{adjacency property} (or the ordering is said to be {\it convex}) if, for each vertex $v \in Y$,  $N(v)$ consists of vertices which are consecutive in the ordering of $X$.
{\it Convex bipartite graphs} are the bipartite graphs $G = ( X\cup Y, E)$ that have the adjacency property on one of the partite sets (let us say $X$).   {\it Biconvex bipartite graphs} are the bipartite graphs $G = ( X\cup Y, E)$ that have the adjacency property on both  partite sets. Figure~\ref{fig:convex-bgraph} shows a convex bipartite graph that is not biconvex.   
It is known that there are linear time recognition algorithms for these graphs class (see \cite{nussbaum2010}).

\remove{
A \emph{tree} is a connected acyclic graph, i.e., one not containing any cycles. The vertices of degree 1 in a tree are its \emph{leaves}.   A tree $T$ with a fixed root
$r$ is a rooted tree.   A \emph{clique} is a complete graph.}

\paragraph{Treewidth}   In our results involving treewidth we will use the particular kind of {\em nice} tree--decompositions (for definitions and notation concerning tree decompositions we refer the reader to \cite{CyganMbook2015}). 

Let $(T,X)$ be a tree decomposition of a graph $G$.
We can  make any tree  $T$ into a rooted tree by choosing a node $r\in V (T)$ as the root, and directing all edges to the root.  In this way we can convert a tree decomposition $(T,X)$ into a \emph{rooted tree decomposition}, by fixing one node  $r$ as the root in $T$.
A rooted tree  decomposition $(T,X,r)$ of $G$  allow us to associate to every node in the graph a subgraph of $G$ as follows: For $v \in  V (T)$, let $R_T (v)$ denote the set of nodes in the subtree rooted at $v$ (including $v$).  For $v \in V (T)$, define $V (v)= \cup_{w\in R_T (v)}X_w$ as the set of vertices included in any bag in the subtree rooted at $v$. Finally, define the associated graph as $G(v)= G[V(v)]$, the subgraph induced by $V(v)$. Observe that $G(r)=G$ 
and  that $X_v$ is a separator in $G$.

 A \emph{nice
tree decomposition} is a variant in which the structure of the nodes is simpler. 
A rooted tree decomposition $(T,X)$ is \emph{nice} if each node in $u\in V(T)$ can be classified in one of the following four types.
\begin{itemize}
\item  \emph{start} node: $u$ has no child and $|X_u| = 1$. 

\item \emph{forget} node:  $u$ has one child $v$ and $X_u \subseteq X_v$ and $|X_u| = |X_v | - 1$.
\item   \emph{introduce} node: $u$ has one child $v$ and $X_v \subseteq X_u$ and $|X_u| = |X_v | + 1$. 
\item  \emph{join} node: $u$ has two children $v$ and $w$ with $X_u = X_v = X_w$ .
\end{itemize}
 Given a   tree decomposition of width $k$ for a graph $G$, a rooted nice tree decomposition with width at most $k$ for $G$  and a polynomial number of nodes can be obtained in  $O(kn)$ time  (see for example \cite{CyganMbook2015}).

\paragraph{Complexity classes}

Many NP-complete problems can be
associated with one or more parameterizations. A \emph{parameterization} is a function $\kappa$ assigning a non negative integer value to each input $x$ \cite{FlumG2006}.
A \emph{fixed parameter tractable (FPT)} algorithm  is an algorithm solving a problem  parameterized by $\kappa$ that on input $x$ takes time  $$f(\kappa(x))\cdot |x|^{\Theta(1)}$$ where $f(k)$ is a (super-polynomial) function that does not depend on $n$.
The Parameterized Complexity settles the question of whether a parameterized  problem is solvable by  an FPT algorithm.
 If such an algorithm exists, we say that the parameterized problem belongs to the 
class FPT of fixed parameter tractable problems. In  a series of fundamental papers (see~\cite{DowneyF95-I,DowneyF95-II}), Downey and Fellows introduced a series of complexity classes, namely the classes
$FPT \subseteq {W}[1]\subseteq {W}[2]\subseteq\cdots\subseteq{W}[SAT]\subseteq {W}[P]$,
and proposed special types of reductions such that hardness for 
some of the above classes makes it rather impossible that a problem belongs in FPT.

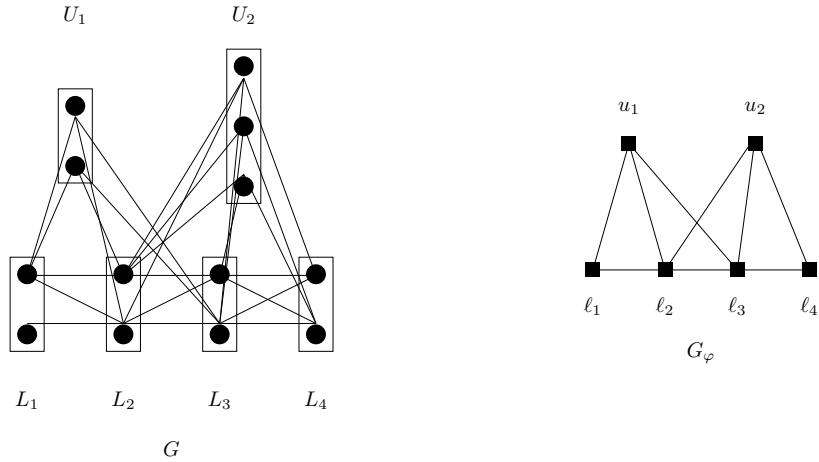
\begin{figure}[t]
\begin{center}\scalebox{0.8}{
\begin{tikzpicture}[scale=0.8]
\draw (2,6) node{$U_1$};
\draw (2,3.5) node[draw] {
\begin{tikzpicture}
\draw(0,0) node[fill=black,circle] {};
\draw(0,1) node[fill=black,circle] {};
\end{tikzpicture}
};

\draw (5.5,6) node{$U_2$};
\draw (5.5,3.7) node[draw] {\begin{tikzpicture}
\draw(0,0) node[fill=black,circle] {};
\draw(0,1) node[fill=black,circle] {};
\draw(0,2) node[fill=black,circle] {};
\end{tikzpicture}
};

\draw (1,-2) node{$L_1$};
\draw (1,0) node[draw] {\begin{tikzpicture}
\draw(0,0) node[fill=black,circle] {};
\draw(0,1) node[fill=black,circle] {};
\end{tikzpicture}};

\draw (3,-2) node{$L_2$};
\draw (3,0) node[draw] {\begin{tikzpicture}
\draw(0,0) node[fill=black,circle] {};
\draw(0,1) node[fill=black,circle] {};
\end{tikzpicture}};

\draw (5,-2) node{$L_3$};
\draw (5,0) node[draw] {\begin{tikzpicture}
\draw(0,0) node[fill=black,circle] {};
\draw(0,1) node[fill=black,circle] {};
\end{tikzpicture}};

\draw (7,-2) node{$L_4$};
\draw (7,0) node[draw] {\begin{tikzpicture}
\draw(0,0) node[fill=black,circle] {};
\draw(0,1) node[fill=black,circle] {};
\end{tikzpicture}};

\remove{\draw (9,-2) node{$L_5$};
\draw (9,0) node[draw] {\begin{tikzpicture}
\draw(0,0.5) node[fill=black,circle]{};
\end{tikzpicture}};}

\remove{
\draw (-1,-2) node{$L_0$};
\draw (-1,0) node[draw] {\begin{tikzpicture}
\draw(-1,0.5) node[fill=black,circle]{};
\end{tikzpicture}};}

\draw (4,-3) node{$G$};


\draw(2,3.9) --  (1,0.6);
\draw(2,2.9) --  (1,0.6);
\draw(2,3.9) --  (3,-0.4);
\draw(2,2.9) --  (3,0.6);
\draw(2,2.9) --  (5,-0.4);
\draw(2,3.9) --  (5,-0.4);

\draw(5.5,3.7) --  (3,0.6);
\draw(5.5,2.7) --  (3,0.6);
\draw(5.5,3.7) --  (5,-0.4);
\draw(5.5,2.7) --  (5,0.6);
\draw(5.5,2.7) --  (7,-0.4);
\draw(5.5,3.7) --  (7,-0.4);
\draw(5.5,4.7) --  (3,0.6);
\draw(5.5,4.7) --  (3,-0.4);
\draw(5.5,4.7) --  (5,-0.4);
\draw(5.5,4.7) --  (7,0.6);

\draw(1,0.6) --  (3,0.6);
\draw(1,-0.4) -- (3,-0.4);
\draw (1,0.6) -- (3,-0.4);

\draw(3,-0.4) --  (5,0.6);
\draw(3,-0.4) -- (5,-0.4);
\draw (3,0.6) -- (5,0.6);

\draw(5,-0.4) --  (7,0.6);
\draw(5,-0.4) -- (7,-0.4);
\draw(5,0.6) -- (7,0.6);
\draw(5,0.6) -- (7,-0.4);


\draw(15,-1) node {$G_\varphi$};
\draw(15,2) node {
\begin{tikzpicture}[scale=0.6]

\draw (2,4.5) node {$u_1$};
\draw (5.5,4.5) node{$u_{2}$};

\draw (2,3.5) node[fill=black] {};
\draw (5.5,3.5) node[fill=black] {};

\draw (1,-1) node {$\ell_1$};
\draw (3,-1) node {$\ell_2$};
\draw (5,-1) node {$\ell_3$};
\draw (7,-1) node {$\ell_4$};

\draw (1,0) node[fill=black] {};
\draw (3,0) node[fill=black] {};
\draw (5,0) node[fill=black] {};
\draw (7,0) node[fill=black] {};

\draw(1,0) --  (3,0);
\draw(3,0) --  (5,0);
\draw(5,0) --  (7,0);

\draw(2,3.5) --  (1,0);
\draw(2,3.5) --  (3,0);
\draw(2,3.5) --  (5,0);

\draw(5.5,3.7) --  (3,0);
\draw(5.5,3.7) --  (5,0);
\draw(5.5,3.7) --  (7,0);

\end{tikzpicture}};
\end{tikzpicture}}
\end{center}
\caption{A colored graph and its  cluster graph.  \label{fig:convex-laygraph}}
\end{figure}

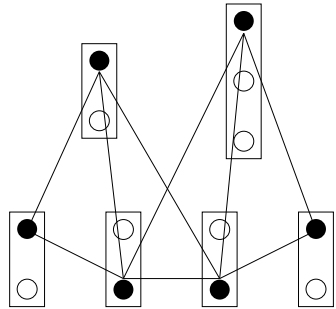
\begin{figure}[t]
\begin{center}
\scalebox{0.8}{
\begin{tikzpicture}[scale=0.8]

\draw (2.5,3.5) node[draw] {
\begin{tikzpicture}
\draw(0,0) node[draw,circle] {};
\draw(0,1) node[fill=black,circle] {};
\end{tikzpicture}
};
\draw (5.5,3.7) node[draw] {\begin{tikzpicture}
\draw(0,0) node[draw,circle] {};
\draw(0,1) node[draw,circle] {};
\draw(0,2) node[fill=black,circle] {};
\end{tikzpicture}
};

\draw (1,0) node[draw] {\begin{tikzpicture}
\draw(0,0) node[draw,circle] {};
\draw(0,1) node[fill=black,circle] {};
\end{tikzpicture}};
\draw (3,0) node[draw] {\begin{tikzpicture}
\draw(0,0) node[fill=black,circle] {};
\draw(0,1) node[draw,circle] {};
\end{tikzpicture}};
\draw (5,0) node[draw] {\begin{tikzpicture}
\draw(0,0) node[fill=black,circle] {};
\draw(0,1) node[draw,circle] {};
\end{tikzpicture}};
\draw (7,0) node[draw] {\begin{tikzpicture}
\draw(0,0) node[draw,circle] {};
\draw(0,1) node[fill=black,circle] {};
\end{tikzpicture}};

\remove{
\draw (9,0) node[draw] {\begin{tikzpicture}
\draw(0,0.5) node[fill=black,circle]{};
\end{tikzpicture}};

\draw (-1,0) node[draw] {\begin{tikzpicture}
\draw(-1,0.5) node[fill=black,circle]{};
\end{tikzpicture}};
}

\draw(2.5,3.9) --  (1,0.6);
\draw(2.5,3.9) --  (3,-0.4);
\draw(2.5,3.9) --  (5,-0.4);

\draw(5.5,4.7) --  (3,-0.4);
\draw(5.5,4.7) --  (5,-0.4);
\draw(5.5,4.7) --  (7,0.6);


\draw (1,0.6) -- (3,-0.4);

\draw(3,-0.4) -- (5,-0.4);

\draw(5,-0.4) --  (7,0.6);


\end{tikzpicture}}
\end{center}
\caption{In black a   multicolored set $S$ realizing $G_\varphi$, for the colored convex layered graph given in Figure \ref{fig:convex-laygraph}. \label{fig:convex-laygraph-ske}}
\end{figure}

\paragraph{The multicolored graph realization problem}
A \emph{coloring} of  a graph $G=(V, E)$ is a map $\varphi:V\to \mathbb{N}$. Observe that our colorings might not be proper as we do not require that adjacent vertices get different colors. However, all our results will also hold for proper colorings. Given a coloring $\varphi$ on $G$, let $k(G,\vp)$ be the number of different colors used by $\vp$. 
\remove{ that uses exactly $k$ colors, we use the term \emph{$k$-coloring} and w.l.o.g. assume that $\varphi:V\to [k]$. So, we assume that in a $k$-coloring of $G$ each one of the colors in $[k]$ is used to color at least one vertex.}  
We use the term \emph{colored graph} to refer to a pair $(G,\vp)$.

 Given a colored  graph $(G,\vp)$  with $G=(V,E)$,   we say that two vertices $u$ and $v$ are equivalent whenever they get  the same color, i.e., $u\sim_\varphi v$ iff $\varphi(u)=\varphi(v)$. This is a natural equivalence relation that partitions the vertices of $G$ into non empty color classes. We use $[u]_\vp$ (or just $[u]$ when $\vp$ is clear from the context) to denote the color class of $u$.

Given a colored  graph $(G,\vp)$, with color classes $A_1, \dots , A_k$, the associated \emph{cluster graph} is the graph $G_\vp=(\{1,\dots, k\},E_\vp)$ where, for $i,j\in \{1,\dots, k\}$ with $i\neq j$, there is an edge $(i,j)\in E_\vp$ whenever there is and edge $(u,v)\in E$ with $u\in A_i$ and $v \in A_j$. Figure \ref{fig:convex-laygraph} gives an example of a colored graph, each rectangle representing a color class,  and its associated cluster graph.  We say that a set of vertices $S\subseteq V$ is \emph{multicolored} if, for any $1\leq i\leq  k$, we have that  $|S\cap A_i|= 1$, i.e., there is exactly one vertex in $S$ from each color class.   For a multicolored set $S$, we assume that $S=\{u_1,\dots, u_k\}$ so that, for $1\leq i\leq k$, $[u_i]= A_i$.
A \emph{multicolored realization} of $G_\vp$ is a multicolored  subset $S\subseteq V$ such that the restriction of $\varphi$ to $S$ is an isomorphism between $G[S]$ and $G_\vp$, i.e.,  for $u_i,u_j\in S$, $(u_i,u_j)\in E(G)$ if and only if $(i,j)\in E(G_\vp)$. Figure \ref{fig:convex-laygraph-ske} shows a multicolored set realizing $G_\varphi$, for the colored convex layered graph given in Figure \ref{fig:convex-laygraph}.

With this notation we can state formally the definition of our problem. 

\smallskip
\begin{quote}
\emph{Multicolored graph realization} problem (\MGRP)\\
Given a graph $G=(V,E)$ together with a coloring $\varphi$, does there exists a  multicolored realization of $G_\vp$?  
\end{quote}

Observe that, given a colored graph $(G, \varphi)$ and a multicolored set $S$, we can check, in polynomial time,  whether $S$ is a realization of $G_\varphi$. Therefore, the  \MGRP  problem belongs to NP.

We are interested in analyzing the computational complexity of the \MGRP problem  under different parameterizations:  
the \emph{number of used colors} $k(G,\varphi)$, the   \emph{cluster size}
$s(G,\vp) = \max_{v\in V} |\vp^{-1} (v)|$, the treewidth of the cluster graph,  and other combinations of  parameters.  Although  the problem is defined as usual in its decision form,  the  algorithms provided in the paper are constructive, they produce a multicolored realization in the case that one exists. 

Our first results follows from a well known problem the Multicolored clique Problem (also known as  the Partitioned clique problem)  which according to \cite{CyganMbook2015} was introduced in \cite{FellowsTCS2009,pietrzak2003}. We provide here a formal definition of the problem adapted to our notation.
\begin{quote}
\emph{Multicolored clique} problem  (MC)\\
Given a  colored graph $(G,\vp)$, does there exists a a multicolored set $S$ such that $G[S]$ is a clique? 
\end{quote}
The MC problem is known to be $W[1]$-hard parameterized by the number of colors \cite{FellowsTCS2009}. This yields the following result in the multicolored problem language.

\begin{theorem}\label{thm:btw}
 The {\MGRP} problem  is $W[1]$-hard  when parameterized by the number of colors or parameterized by the treewidth of  the cluster graph.  
\end{theorem}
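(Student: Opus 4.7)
The plan is to deduce the theorem directly from the $W[1]$-hardness of the Multicolored Clique (MC) problem parameterized by the number of colors. The driving observation is that on any colored graph $(G,\vp)$ whose cluster graph $G_\vp$ is the complete graph $K_k$, the MC and \MGRP problems coincide: a multicolored set $S$ induces a clique in $G$ if and only if $G[S]$ is isomorphic to $K_k = G_\vp$, which is exactly the definition of a multicolored realization. Hence an MC instance satisfying $G_\vp=K_k$ can be fed verbatim to \MGRP, giving an identity reduction on the parameter $k$; an instance with $G_\vp\neq K_k$ is a trivial NO for MC and can be replaced by a canonical NO-instance of \MGRP on $k$ colors.

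Of course, for this to work we need MC to remain $W[1]$-hard when restricted to instances with $G_\vp=K_k$. I would verify this by inspecting the standard reduction from CLIQUE to MC used in \cite{FellowsTCS2009,pietrzak2003}: given $(H,k)$ one creates $k$ color classes $V_1,\dots,V_k$, each a copy of $V(H)$, and places, for every edge $uv$ of $H$ and every pair $i\neq j$, an edge between the $i$-th copy of $u$ and the $j$-th copy of $v$. Provided $H$ has at least one edge (which we may assume), every pair of color classes is joined by at least one edge and the cluster graph of the constructed instance is $K_k$. Thus MC is already $W[1]$-hard on the subclass of instances with complete cluster graph, and our identity map is then an FPT reduction transferring this hardness to \MGRP parameterized by the number of colors.

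For the second parameterization, observe that the cluster graph produced by the reduction is $K_k$, whose treewidth is $k-1$, a function of $k$ alone. Hence the very same reduction is also an FPT reduction from MC parameterized by the number of colors to \MGRP parameterized by the treewidth of the cluster graph, yielding the second half of the theorem. The only non-bookkeeping point is confirming that the textbook CLIQUE-to-MC reduction already produces a complete cluster graph; once that is in hand, nothing else needs to be checked, and I do not anticipate any real obstacle.
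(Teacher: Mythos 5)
Your proposal is correct and follows essentially the same route as the paper: it observes that the Multicolored Clique problem is exactly the restriction of \MGRP to instances whose cluster graph is complete, and then notes that the treewidth of $K_k$ is $k-1$, so the number-of-colors hardness transfers to the treewidth parameterization. Your extra step of checking that the standard CLIQUE-to-MC reduction already yields a complete cluster graph (and routing trivial NO instances to a canonical NO instance) is a careful but equivalent way of making the paper's one-line observation precise.
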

\begin{proof}
Observe that, in a colored graph, a necessary condition to have a multicolored clique is the cluster graph $G_\vp$ being itself a clique.  Therefore, the MC problem is the particular case of the \MGRP problem when the cluster graph $G_\vp$ is a complete graph.  
On the other hand, when the number of used colors $k$ is bounded, the cluster graph  $G_\varphi$ has constant size and  therefore has bounded treewidth.  
\end{proof}

\section{Chordal and convex bipartite cluster graphs}\label{sec:skeleton}\label{sec:3}

In this section we analyze the complexity of the \MGRP problem on colored graphs having a  cluster graph that is chordal or chordal bipartite. We start presenting the NP-hardness for the case of chordal graphs. 
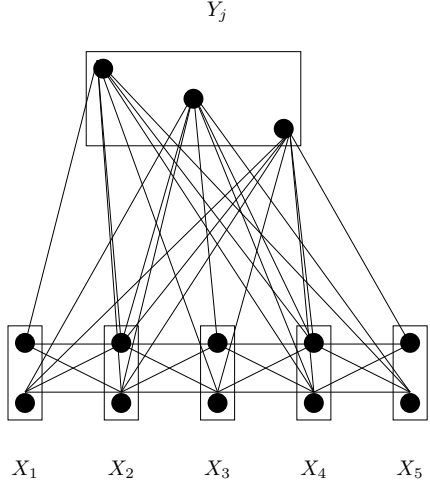
\begin{figure}[t]
\begin{center}
\scalebox{0.8}{
\begin{tikzpicture}[scale=0.8]

\draw(5,7.5) node {$Y_j$};
\draw (4.5,5.7) node[draw] {\begin{tikzpicture}
\draw(3,0) node[fill=black,circle] {};
\draw(1.5,0.5) node[fill=black,circle] {};
\draw(0,1) node[fill=black,circle] {};
\end{tikzpicture}
};

\draw (1,0) node[draw] {\begin{tikzpicture}
\draw(0,0) node[fill=black,circle] {};
\draw(0,1) node[fill=black,circle] {};
\end{tikzpicture}};
\draw (3,0) node[draw] {\begin{tikzpicture}
\draw(0,0) node[fill=black,circle] {};
\draw(0,1) node[fill=black,circle] {};
\end{tikzpicture}};
\draw (5,0) node[draw] {\begin{tikzpicture}
\draw(0,0) node[fill=black,circle] {};
\draw(0,1) node[fill=black,circle] {};
\end{tikzpicture}};
\draw (7,0) node[draw] {\begin{tikzpicture}
\draw(0,0) node[fill=black,circle] {};
\draw(0,1) node[fill=black,circle] {};
\end{tikzpicture}};
\draw (9,0) node[draw] {\begin{tikzpicture}
\draw(0,0) node[fill=black,circle] {};
\draw(0,1) node[fill=black,circle] {};
\end{tikzpicture}};


\remove{\draw (15,0) node[draw] {\begin{tikzpicture}
\draw(0,0.5) node[fill=black,circle]{};
\end{tikzpicture}};

\draw (-1,0) node[draw] {\begin{tikzpicture}
\draw(-1,0.5) node[fill=black,circle]{};
\end{tikzpicture}};
}


\draw(1,-2) node {$X_1$};
\draw(3,-2) node {$X_2$};
\draw(5,-2) node {$X_3$};
\draw(7,-2) node {$X_4$};
\draw(9,-2) node {$X_5$};


\draw(2.5,6.5) --  (1,0.6);
\draw(2.5,6.5) --  (3,0.6);
\draw(2.5,6.5) --  (3,-0.4);
\draw(2.5,6.5) --  (5,-0.4);
\draw(2.5,6.5) --  (7,0.6);
\draw(2.5,6.5) --  (7,-0.4);
\draw(2.5,6.5) --  (9,-0.4);

\draw(4.5,5.8) --  (1,-0.4);
\draw(4.5,5.8)--  (3,0.6);
\draw(4.5,5.8) --  (3,-0.4);
\draw(4.5,5.8) --  (5,0.6);
\draw(4.5,5.8) --  (7,0.6);
\draw(4.5,5.8) --  (7,-0.4);
\draw(4.5,5.8) --  (9,-0.4);

\draw(6.5,5)  --  (1,-0.4);
\draw(6.5,5) --  (3,0.6);
\draw(6.5,5)--  (3,-0.4);
\draw(6.5,5) --  (5,-0.4);
\draw(6.5,5) --  (7,0.6);
\draw(6.5,5) --  (7,-0.4);
\draw(6.5,5) --  (9,0.6);


\draw(1,-0.4) --  (3,0.6);
\draw(1,0.6) --  (3,0.6);
\draw(1,-0.4) -- (3,-0.4);
\draw (1,0.6) -- (3,-0.4);

\draw(3,-0.4) --  (5,0.6);
\draw(3,-0.4) -- (5,-0.4);
\draw (3,0.6) -- (5,0.6);
\draw (3,0.6) -- (5,-0.4);

\draw(5,-0.4) --  (7,0.6);
\draw(5,-0.4) -- (7,-0.4);
\draw(5,0.6) -- (7,0.6);
\draw(5,0.6) -- (7,-0.4);

\draw(7,-0.4) --  (9,0.6);
\draw(7,-0.4) -- (9,-0.4);
\draw(7,0.6) -- (9,0.6);
\draw(7,0.6) -- (9,-0.4);


\end{tikzpicture}}
\end{center}
\caption{The connections representing clause  $C_j=(x_1, x_3,x_5)$  in the associated  colored graph.  \label{fig:convex-graph-clause}}
\end{figure}

\begin{theorem}\label{thm:ConSke-NPhard}
 The {\MGRP} problem is NP-complete, for colored  graphs having a chordal cluster graph, even when the cluster size is at most 3. 
\end{theorem}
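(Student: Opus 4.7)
The plan is to establish NP-hardness by a polynomial-time reduction from the NP-complete 1-in-3 monotone SAT problem; membership in NP has already been observed in the excerpt, since a candidate multicolored set $S$ can be checked in polynomial time against $G_\vp$.

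Given an instance of 1-in-3 monotone SAT with variables $x_1,\dots,x_n$ and clauses $C_1,\dots,C_m$, each consisting of exactly three distinct positive literals, I would build a colored graph $(G,\vp)$ as follows. For every variable $x_i$ introduce a color class $X_i=\{t_i,f_i\}$, whose two vertices represent the truth values of $x_i$; for every clause $C_j=\{x_a,x_b,x_c\}$ introduce a color class $Y_j=\{y_j^a,y_j^b,y_j^c\}$, one vertex per literal declaring itself to be the unique true one. Every class thus has at most three vertices. The edges of $G$ are: (i) for every pair of distinct variable classes $X_i,X_{i'}$, the complete bipartite graph on $\{t_i,f_i\}\times\{t_{i'},f_{i'}\}$; (ii) for every clause $C_j=\{x_a,x_b,x_c\}$, connect $y_j^a$ to $\{t_a,f_b,f_c\}$, $y_j^b$ to $\{f_a,t_b,f_c\}$, and $y_j^c$ to $\{f_a,f_b,t_c\}$; (iii) no other edges, in particular none between different $Y_j$'s and none between $Y_j$ and any $X_i$ with $x_i\notin C_j$. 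Hence $G_\vp$ is the graph in which the variable clusters $X_1,\dots,X_n$ form a clique $K_n$, the clause clusters $Y_1,\dots,Y_m$ form an independent set, and each $Y_j$ is adjacent exactly to the three $X_i$'s corresponding to its literals.

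The key structural step is to verify that $G_\vp$ is chordal. In an induced cycle, the $Y_j$'s cannot be cycle-consecutive because they form an independent set, and any two cycle-non-consecutive $X_i$'s would create a chord because they lie in a clique. A short case analysis on the number of $Y$-nodes in a hypothetical induced cycle of length at least $4$ then forces a contradiction, showing that every induced cycle of $G_\vp$ is a triangle, so $G_\vp$ is chordal with cluster size at most $3$.

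For correctness, from a multicolored realization $S=\{s_i\in X_i\}\cup\{r_j\in Y_j\}$ one defines the assignment $x_i=T$ iff $s_i=t_i$. Since $Y_j$ is adjacent to $X_a,X_b,X_c$ in $G_\vp$, the chosen vertex $r_j=y_j^\ell$ must be adjacent in $G$ to the selected $s_a,s_b,s_c$, and inspection of the three possibilities for $\ell$ shows that this can occur only if precisely one of $x_a,x_b,x_c$ is true. Conversely, any 1-in-3 satisfying assignment gives canonical choices of $s_i$ and of the unique $r_j$, and the absent edges in (iii) automatically ensure that the required non-adjacencies in $G_\vp$ are respected. The main obstacle is the simultaneous requirement that the cluster graph be chordal and that the 1-in-3 constraint be encoded through the witness vertices; this is precisely why I would add the complete bipartite edges between variable classes, turning the variable side of $G_\vp$ into a clique and thereby killing the chordless 4-cycles that would otherwise arise from two clauses sharing a pair of variables.
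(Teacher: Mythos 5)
Your reduction is correct, but it is not the one the paper uses, and the difference is worth noting. The paper also reduces from 1-in-3 Monotone SAT with two-vertex variable classes and three-vertex clause classes, but it arranges the variable clusters $X_1,\dots,X_n$ as a \emph{path} (complete bipartite connections only between consecutive classes $X_i,X_{i+1}$) and, to keep the cluster graph chordal, forces each clause vertex of $C_j=(x_{j_1},x_{j_2},x_{j_3})$ to be adjacent not only to the three literal-encoding vertices but also to \emph{both} vertices of every intermediate class $X_i$ with $j_1<i<j_3$, so that the neighbourhood of $Y_j$ in the cluster graph is the contiguous interval $\{X_{j_1},\dots,X_{j_3}\}$; chordality then follows from this interval structure, and the correctness argument has to verify that the extra adjacencies to intermediate variables are vacuous constraints. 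You instead make the variable clusters a clique $K_n$ and attach each $Y_j$ only to its three literal classes, so the cluster graph is a split graph (clique plus an independent set of simplicial vertices), which is chordal for free, and no padding edges to intermediate variables are needed -- your chordality case analysis can in fact be replaced by the one-line observation that each $Y_j$ is simplicial. Your version is arguably cleaner for this theorem; the price is that the paper's path-based layout is what it later perturbs to obtain the convex/biconvex bipartite hardness of Theorem 3 (dropping the $X$--$X$ edges leaves a convex bipartite cluster graph there), an adaptation your clique-based construction would not support as directly. One small point to make explicit: you should state (as the paper implicitly does) that membership in NP plus polynomial-time constructibility of $(G,\vp)$ completes the NP-completeness claim, and that clauses are assumed to contain three distinct variables so that every class has size exactly $2$ or $3$.
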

\begin{proof}
As the \MGRP problem belongs to NP,  we have only to show that the problem is NP-hard. For doing so, we provide a reduction from  1-in-3 Monotone SAT problem which is known to be NP-complete \cite{schaefer1978}.  The input formula $\Phi$ (in CNF) is the conjunction of  $m$ clauses,  $C_1, C_2, \ldots, C_m$,  over a set of $n$ variables. Furthermore,  each  clause is the disjunction of exactly three non-negated variables. The problem asks whether  it is possible to assign to each  of the $n$ variables $x_1, x_2 \ldots x_n$ a value in $\{0,1\}$,  so that, in each clause, exactly one of the three variables is set to 1. As a shorthand, we write each clause as $(x,y,z)$, without use of the or symbol.   Furthermore, we assume that the variables in a clause are given in increasing order, i.e. $C_j = (x_{j_1},x_{j_2}, x_{j_3})$ with  $j_1\leq j_2\leq j_3$

Given an input  formula $\Phi$ to the 1-in-3 Monotone SAT problem, we construct a colored convex layered graph $(G,\varphi)$. Instead of describing the coloring $\varphi$, we provide the different color classes and the connection among their vertices. In this way, it is easy to see that the construction provides a colored graph with a chordal cluster graph. For each variable $x_i$, $1\leq i\leq n$, we create a color class  $X_i$ having two vertices $v_i^0$ and $v_i^1$.  For each clause $C_j$, $1\leq j\leq m$, we create a color class $Y_j=\{ u_j^{100}, u_j^{010},u_j^{001}\}$ with one vertex for each possible join assignment with only one 1.  

The edge set in $G$ is the following:
\begin{enumerate}
\item For $1\leq i<n$, we connect the vertices in $X_i$ with those in $X_{i+1}$ by a complete bipartite subgraph. 
\item For a clause $C_j=(x_{j_1},x_{j_2}, x_{j_3})$, we connect: 
\begin{itemize}
\item $u_j^{100}$ with $v_{j_1}^1$, $v_{j_2}^0$ and $v_{j_3}^0$ and with both $v_i^0$ and $v_i^1$, for $j_1<i<j_2$ and $j_2<i<j_3$; 
\item $u_j^{010}$ with $v_{j_1}^0$, $v_{j_2}^1$ and $v_{j_3}^0$ and with both $v_i^0$ and $v_i^1$, for $j_1<i<j_2$ and $j_2<i<j_3$; 
\item $u_j^{001}$ with $v_{j_1}^0$, $v_{j_2}^0$ and $v_{j_3}^1$ and with both $v_i^0$ and $v_i^1$, for $j_1<i<j_2$ and $j_2<i<j_3$.
\end{itemize} 
\end{enumerate}
Figure  \ref{fig:convex-graph-clause} shows  the color classes, the vertices,  and the connections corresponding to a  clause.
It is easy to see that $G_\vp$ is a chordal graph.

Now we show that the construction is indeed a reduction from the 1-in-3 Monotone SAT problem to the \MGRP problem. As a main argument, we translate   the appearance of $v_i^0$ in a multicolored realization  (if it exists) as the assignment $x_i=0$ and the appearance of $v_i^1$ as the assignment $x_i=1$, and viceversa.  

Let us assume that we have an  assignment $T$ such that $T(x_i)=t_i\in \{0,1\}$, for $1\leq i\leq n$, in which exactly one variable in each clause in $\Phi$ is assigned  to 1.  Consider the set 
\[S=\{v_i^{t_i}\mid 1\leq i\leq n\} \cup \{u_j^{ t_{j_1}t_{j_2}t_{j_3}}\mid 1\leq j\leq m\}.\]
For a clause $C_j=(x_{j_1},x_{j_2},x_{j_3})$, by the definition of $G$,  the vertex $u_j^{ t_{j_1}t_{j_2}t_{j_3}}$ becomes connected with all the vertices $v_i^{t_i}$, for  $i\in [j_1, j_3]$. Therefore, $S$ is a  multicolored realization of $G_\varphi$.

For the reverse implication,  let $X=\cup_{i=1}^n X_i$.  Assume that $S$ is a multicolored realization of $G_\vp$ and that  $S\cap X =\{v_1^{t_1},v_2^{t_2},\ldots v_n^{t_n}\}$ where $t_i\in \{0,1\}$, for $1\leq i \leq n$. Consider the assignment  $T(x_i)=t_i$. By construction, the assignment $T$ is correct, as each variable $x_i$ gets a unique assigned value. 
Furthermore, observe that, if  a clause $C_j=(x_{j_1},x_{j_2},x_{j_3})$ contains a variable with assigned value 1, the other variables in $C_j$ are assigned value 0. This is due to the fact that $S$ is a  realization of the cluster graph. Assuming w.l.o.g. that $t_{j_1}=1$,   $G[S]$  contains $v_{j_1}^{1}$, then it must contain $v_{j_2}^{0}$ and $v_{j_3}^{0}$ as well, as  $u_j^{100}$ must belong to $S$ and its neighbors must be contiguous on the interval $[j_1, j_3]$. Thus in each clause only one of the three variables has assigned  value 1. 

Finally, observe that the graph can be constructed in polynomial time from the given formula. 
\end{proof}

We can adapt the previous reduction to show that the \MGRP problem remains hard  when the  cluster graph is a convex bipartite or a biconvex bipartite

\begin{theorem}\label{thm:ConBip-NPhard}
The  {\MGRP} problem is NP-complete, for colored graph having a biconvex bipartite cluster graph, even when the cluster size is at most 3. 
\end{theorem}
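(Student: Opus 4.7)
The plan is to adapt the reduction from 1-in-3 Monotone SAT used in the proof of Theorem~\ref{thm:ConSke-NPhard}, rearranging the edges so that the resulting cluster graph becomes the complete bipartite graph $K_{n,m}$. Since $K_{n,m}$ is trivially biconvex bipartite (any ordering of either side enjoys the adjacency property), this settles hardness for the biconvex bipartite case; NP-membership has already been observed for the general problem.

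I would keep the same color classes as in Theorem~\ref{thm:ConSke-NPhard}, namely $X_i=\{v_i^0,v_i^1\}$ for each variable $x_i$ and $Y_j=\{u_j^{100},u_j^{010},u_j^{001}\}$ for each clause $C_j=(x_{j_1},x_{j_2},x_{j_3})$. Two modifications to the edge set suffice. First, drop the chain edges $X_i\sim X_{i+1}$; they played no role in enforcing the 1-in-3 condition, and their removal makes the graph bipartite with parts $\{X_i\}$ and $\{Y_j\}$. Second, for every pair $(Y_j,X_i)$ with $i\notin\{j_1,j_2,j_3\}$, insert a complete bipartite \emph{don't-care} block, connecting each vertex of $Y_j$ to both $v_i^0$ and $v_i^1$. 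The three original clause-edges from each $u_j^{\bullet}$ to $X_{j_1},X_{j_2},X_{j_3}$ are preserved verbatim. After these changes every $Y_j$ is adjacent at the cluster level to every $X_i$, so $G_\varphi=K_{n,m}$; the cluster size is at most~$3$, and the construction is polynomial.

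For correctness, the forward direction maps any satisfying 1-in-3 assignment $T$ to the multicolored set
\[ S=\{v_i^{T(x_i)} : 1\le i\le n\}\cup\{u_j^{T(x_{j_1})T(x_{j_2})T(x_{j_3})} : 1\le j\le m\}, \]
which realizes $K_{n,m}$: the adjacencies to $X_{j_1},X_{j_2},X_{j_3}$ come from the preserved clause-edges as in Theorem~\ref{thm:ConSke-NPhard}, while those to the remaining $X_i$ come from the new don't-care blocks irrespective of $T(x_i)$. Conversely, given any multicolored realization $S$, for each clause $j$ the chosen $u_j^{\bullet}\in S$ is adjacent to the three chosen $v_{j_r}^{t_{j_r}}\in S$; by inspecting the three possible vertices of $Y_j$ and their (non-symmetric) preserved neighborhoods, this forces exactly one of $t_{j_1},t_{j_2},t_{j_3}$ to equal~$1$, recovering a valid 1-in-3 assignment.

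The only point to verify, rather than a real obstacle, is that the don't-care blocks do not make the reduction unsound: inside such a block every $u_j^{\bullet}$ is joined to both $v_i^0$ and $v_i^1$, so those edges are neutral with respect to $T(x_i)$ and leave each clause's 1-in-3 constraint to be enforced solely by the three preserved clause-edges. That is precisely what allows the cluster graph to be promoted to $K_{n,m}$ without breaking the encoding of the SAT instance.
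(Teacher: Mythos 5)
Your proposal is correct and follows essentially the same route as the paper: drop the all-to-all chain edges between consecutive variable classes, keep the clause edges, and pad every clause class with complete ``don't-care'' blocks to all variable classes not occurring in the clause, so that the cluster graph becomes a complete bipartite graph, which is trivially biconvex. The correctness argument (don't-care blocks are neutral, the three preserved clause edges force exactly one true variable per clause) matches the paper's reasoning.
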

\begin{proof}

We  modify slightly the construction in the previous theorem to get a colored graph whose cluster graph is a convex bipartite graph. For doing so, we just remove the connections among the lower layers. In this way, the cluster graph is a bipartite graph, that indeed is convex with respect to the $X$ color classes. As the connections among consecutive  $X$ clusters  were all-to-all, any multicolored subset  realizing the cluster graph does it in both graphs.      

Finally, we add some additional connections to the vertices in the clusters corresponding to a clause. For a clause $C_j=(x_{j_1},x_{j_2}, x_{j_3})$, we connect $u_j^{100}$, $u_j^{010}$ and $u_j^{001}$ with both $v_i^0$ and $v_i^1$, for $1\leq i <j_1$ and   $j_3<i\leq n$. As the new connections are the same as those for the intermediate variables not appearing in $C_j$ the construction is    a reduction from the 1-in-3 Monotone SAT problem to the \MGRP problem. Observe, that the graph $G_\vp$ is a complete bipartite graph and therefore it is biconvex. 
\end{proof}

Let $G=(X\cup Y , E)$ be a  convex bipartite graph that has the adjacency property with respect to $X$. We define the \emph{spread}  of $G$ as the maximum degree of the vertices in $Y$.
Although Theorem~\ref{thm:ConSke-NPhard} shows the hardness of  the {\MGRP} problem even when the cluster size is bounded,  the spread is unbounded. Our next results gives an FPT algorithm solving  the \MGRP problem  on colored convex bipartite graphs when parameterized by both, the size of the cluster size and the spread.

\begin{proposition}\label{prop:fpt-cb} The  {\MGRP} problem belongs to FPT, for  colored graphs having a convex bipartite cluster graph, when parameterized by the cluster size and the spread.   
  \end{proposition}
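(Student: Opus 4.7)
The plan is a sliding-window dynamic program over the ordered side of the cluster graph, whose state space depends only on the cluster size $s$ and the spread $d$. Write $G_\varphi=(X_\varphi\cup Y_\varphi,E_\varphi)$ with $X_\varphi=\{X_1,\dots,X_p\}$ in the convex ordering, and for every $Y_j\in Y_\varphi$ denote by $[a_j,b_j]$ its cluster-neighborhood, an interval of length at most $d$.

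A useful first observation is that, because $G_\varphi$ is bipartite, all the non-adjacency constraints built into the definition of a multicolored realization are automatic. Indeed, if two clusters $C,C'$ satisfy $(C,C')\notin E_\varphi$, then by the definition of cluster graph there is no edge of $G$ between them, so every pair of representatives chosen from these two clusters is non-adjacent in $G$. Consequently, finding a multicolored realization reduces to picking one vertex $v_i\in X_i$ for each $i\in[p]$ and one vertex $y_j\in Y_j$ for each $Y_j\in Y_\varphi$, subject to the single constraint that $(v_i,y_j)\in E(G)$ whenever $(X_i,Y_j)\in E_\varphi$.

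I would then sweep the indices $i=1,\dots,p$ from left to right. The DP state after step $i$ records the chosen reps in the last $d$ positions, namely the tuple $(v_{\max(1,i-d+1)},\dots,v_i)$, giving at most $s^d$ states per step. The transition from step $i$ to step $i+1$ enumerates each of the at most $s$ candidates for $v_{i+1}$ and, for every $Y_j$ with $b_j=i+1$, checks whether some $y_j\in Y_j$ is adjacent in $G$ to all of $v_{a_j},\dots,v_{b_j}$. This check is well-defined because $b_j-a_j+1\le d$ places all relevant reps inside the current window. A transition is feasible exactly when every $Y_j$ that closes at step $i+1$ admits such a $y_j$; these $y_j$'s can be fixed immediately, since each $Y_j$ is a distinct color class whose representative interacts only with vertices in its cluster-neighborhood of $X_\varphi$, and that entire neighborhood has just been seen.

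Accepting states are those reachable after processing $X_p$, and a realization in the positive case is produced by standard backtracking. The running time is $O(p\cdot s^{d+1}\cdot q\cdot s)$, where $q=|Y_\varphi|$, polynomial in the input with parameter dependence $s^{O(d)}$, which places \MGRP in FPT for this parameterization. The one point that needs to be argued cleanly is that the sliding-window state is truly a sufficient summary of the past: any $Y_j$-constraint still open at step $i$ has $b_j\ge i$ and therefore $a_j\ge i-d+1$, so the window already contains every $X$-rep that will ever be compared with $Y_j$. This is where the convexity hypothesis is essential, and it is the main conceptual step of the proof.
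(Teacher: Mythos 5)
Your proposal is correct and follows essentially the same route as the paper's proof: a sliding-window dynamic program over the convex ordering of the $X$-side, with states given by the tuples of chosen representatives in the last $O(d)$ positions (at most $s^{O(d)}$ of them) and each $Y_j$-constraint verified at the step where its interval closes. Your explicit remark that non-adjacency constraints are automatic by the definition of the cluster graph is left implicit in the paper but is the same underlying justification.
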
 
\begin{proof}
Let $(G,\vp)$ be a colored convex bipartite graph, with cluster size $\ell$ and spread $d$.
Assume that the color classes are $X_1,\dots,X_\alpha$, $Y_1,\dots,Y_\beta$ and that 
$G_\vp$ is a bipartite cluster graph having the adjacency property with respect to the clusters $X_1,\dots,X_\alpha$.   Recall that there is a  linear time algorithm  to recognize convex bipartite graphs \cite{spinrad1987, nussbaum2010}. This  allows us to obtain the ordering on the $X$ part in linear time. We devise a dynamic programming algorithm based on this  ordering.  

Let $X=\cup_{i=1}^\alpha X_i$ and  $Y=\cup_{j=1}^\beta Y_j$. 
For each $i$, $d\leq i\leq \alpha$, let $P_i= X_{i-d} \times \dots \times  X_i$, of tuples formed by $d-1$ vertices in consecutive layers ending at  a vertex in $X_i$.  Let $G_i$ be the subgraph induced in $G$ by 
$$ V_i = \left(X_1\cup \dots\cup X_i\right) \cup \left(\cup_{j\mid b_j\leq i} Y_j\right ).$$ 

For each $1<i \leq \alpha$, our  dynamic programming algorithm keeps a table $M_i$ holding a  boolean value for each $p\in P_i$. Thus the table size is $|P_i|$.  The entry $M_i(p)$ will be set to 1  whenever  there is a  multicolored set $S\subseteq V_i$ that is a realization for $G_{i,\vp}$ such that $S$ contains all the vertices in  $p$. Otherwise, the value will be 0.

When $i=d$, for each $p\in P_d$, we have to check whether the set of vertices in $p$ can be extended to a multicolored realization in $G_d$. For this, it is enough to check whether, for each color class  $Y_j$ included in $V_d$,  there exists  a vertex $u_j\in Y_j$ so that it is connected to all the vertices in $p$ belonging to  layers in $[a_j,b_j]$. In this  case, set $M_d(p)= 1$, and otherwise set  $M_d(p)= 0$. 
  
When $d < i\leq \alpha$, for $p\in P_i$,  we set $M_i(p)=1$,    if 
\begin{itemize}
\item[(1)] for any $Y_j$ with $b_j=i$,   there is a   a vertex  in $P_j$ connected to all the vertices in $p$ in color classes $X_i$ with $i\in [a_j,b_j]$, and
\item[(2)] there exists  $p'\in  P_{i-1}$, such that $p$ extends $p'$  and $M_{i-1}(p) = 1$. 
 \end{itemize}
 
Observe that, as the considered $Y$ sets are  included in $G_i$ but not in $G_{i-1}$, condition (1),  guarantees that $p$ can be extended to a multicolored realization with respect to the newly incorporated $Y$-sets. On the other hand, condition (2) guarantees that the vertices in $p$ can be extended  to a multicolored realization with respect to $G_{i-1}$, as $p'$ contains all the vertices in  $p$ except the one in the last $X$ cluster.   

So, we can conclude that the proposed algorithm correctly computes $M_i(p)$,  for each $i$,  $d \leq i\leq \alpha$, and $p\in P_i$.

Note that  $G_\alpha= G$, so if $M_\alpha(p)=1$, the set of vertices in $p$ can be extended to a multicolored realization for $G$.   The last step of our algorithm just checks  whether there is  $p\in P_d$ having $M_d(p)=1$.

For the time complexity, observe that $|P_i|\leq \ell^d$. Checking conditions (1) and (2) is the most costly operation.  For a given $j$ and $p$, checking condition (1)  takes time $O(dn)$. Furthermore, the algorithm  performs this checking once for each $j$. So the overall time is $O(\ell^d dn)$. For given $p\in P_i$ and $p'\in P_{i-1}$,  checking condition (2) requires constant time. On the other hand the number of tuples that $p$ can extend is at most $|X_{i-1}|$. This gives an overall time of $O(\ell^2 \ell^d)$. 
The total cost is $O(  (n d + \ell^2)    \ell^d)$. 

Using standard dynamic programming techniques the algorithm can be adapted to produce a multicolored realization when one exists within the same time bounds. 
\end{proof}

\section{Grid cluster graphs}\label{sec:4}

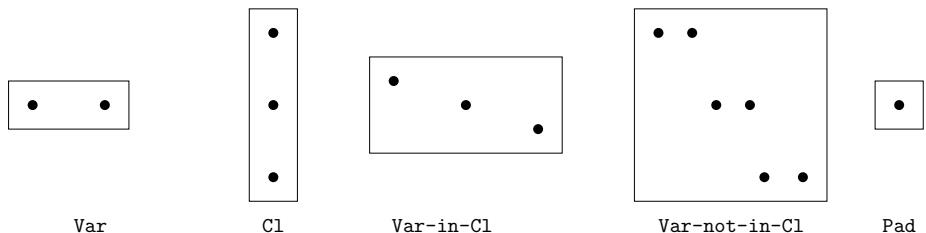
\begin{figure}[t]
\begin{center}
\scalebox{0.8}{
\begin{tikzpicture}[scale=0.8] 

\draw (1.5,1.5) node[fill=black,circle,scale=0.5]  (x0) {};
\draw(0,1.5) node[fill=black,circle,scale=0.5] (x1) {};
\draw (-0.5,1) rectangle +(2.5,1);

\draw(5,0) node[fill=black,circle,scale=0.5] (xi3) {};
\draw(5,1.5) node[fill=black,circle,scale=0.5] (xi2) {};
\draw(5,3) node[fill=black,circle,scale=0.5] (xi1){};
\draw (4.5,-0.5) rectangle +(1,4);

\draw(10.5,1) node[fill=black,circle,scale=0.5] (xi3) {};
\draw(9,1.5) node[fill=black,circle,scale=0.5] (xi2) {};
\draw(7.5,2) node[fill=black,circle,scale=0.5] (xi1){};
\draw (7,0.5) rectangle +(4,2);


\draw(15.2,0) node[fill=black,circle,scale=0.5] (xi31) {};
\draw(14.2,1.5) node[fill=black,circle,scale=0.5] (xi21) {};
\draw(13,3) node[fill=black,circle,scale=0.5] (xi11){};
\draw(16,0) node[fill=black,circle,scale=0.5] (xi30) {};
\draw(14.9,1.5) node[fill=black,circle,scale=0.5] (xi20) {};
\draw(13.7,3) node[fill=black,circle,scale=0.5] (xi10){};
\draw (12.5,-0.5) rectangle +(4,4);

\draw(18,1.5) node[fill=black,circle,scale=0.5] (a) {};
\draw(17.5,1) rectangle +(1,1);

\draw (1.2,-1) node {\gn{Var}};
\draw (5,-1) node {\gn{Cl}};
\draw (8.5,-1) node {\gn{Var-in-Cl}};
\draw (14.5,-1) node {\gn{Var-not-in-Cl}};
\draw (18,-1) node {\gn{Pad}};
\end{tikzpicture}}

\end{center}
\caption{The five basic cluster gadgets.  \label{fig:grid-gadgets}}
\end{figure}
\begin{figure}[t]
\begin{center}
\scalebox{0.7}{
\begin{tikzpicture}[every node/.style={scale=0.5},scale=0.8] 

\draw (2.5,7) node[fill=black,circle]  (x0) {};
\draw(1,7) node[fill=black,circle] (x1) {};

\draw(3,3) node[fill=black,circle] (xi3) {};
\draw(1.5,3.5) node[fill=black,circle] (xi2) {};
\draw(0,4) node[fill=black,circle] (xi1){};

\draw(2.5,0) node[fill=black,circle] (x00) {};
\draw(1,0) node[fill=black,circle] (x11) {};

\draw (0.5,-0.5) rectangle +(2.5,1);
\draw (-0.5,2.5) rectangle +(4,2);
\draw (0.5,6.5) rectangle +(2.5,1);

\draw (x1) -- (xi1) -- (x11);
\draw (x0) -- (xi2) -- (x00);
\draw (x0) -- (xi3) -- (x00);

\end{tikzpicture}
\hspace{0.5cm}
\begin{tikzpicture}[every node/.style={scale=0.5},scale=0.8] 

\draw (2.5,7) node[fill=black,circle]  (x0) {};
\draw(1,7) node[fill=black,circle] (x1) {};

\draw(3,3) node[fill=black,circle] (xi3) {};
\draw(1.5,3.5) node[fill=black,circle] (xi2) {};
\draw(0,4) node[fill=black,circle] (xi1){};

\draw(2.5,0) node[fill=black,circle] (x00) {};
\draw(1,0) node[fill=black,circle] (x11) {};

\draw (0.5,-0.5) rectangle +(2.5,1);
\draw (-0.5,2.5) rectangle +(4,2);
\draw (0.5,6.5) rectangle +(2.5,1);

\draw (x1) -- (xi2) -- (x11);
\draw (x0) -- (xi1) -- (x00);
\draw (x0) -- (xi3) -- (x00);
\end{tikzpicture}
\hspace{0.5cm}
\begin{tikzpicture}[every node/.style={scale=0.5},scale=0.8] 

\draw (2.5,7) node[fill=black,circle]  (x0) {};
\draw(1,7) node[fill=black,circle] (x1) {};

\draw(3,3) node[fill=black,circle] (xi3) {};
\draw(1.5,3.5) node[fill=black,circle] (xi2) {};
\draw(0,4) node[fill=black,circle] (xi1){};

\draw(2.5,0) node[fill=black,circle] (x00) {};
\draw(1,0) node[fill=black,circle] (x11) {};

\draw (0.5,-0.5) rectangle +(2.5,1);
\draw (-0.5,2.5) rectangle +(4,2);
\draw (0.5,6.5) rectangle +(2.5,1);

\draw (x1) -- (xi3) -- (x11);
\draw (x0) -- (xi2) -- (x00);
\draw (x0) -- (xi1) -- (x00);

\end{tikzpicture}
\hspace{0.5cm}
\begin{tikzpicture}[every node/.style={scale=0.5},scale=0.8] 

\draw (2.5,7) node[fill=black,circle]  (x0) {};
\draw(1,7) node[fill=black,circle] (x1) {};

\draw(3.2,3) node[fill=black,circle] (xi31) {};
\draw(1.8,3.7) node[fill=black,circle] (xi21) {};
\draw(0.2,4.5) node[fill=black,circle] (xi11){};
\draw(2.5,3) node[fill=black,circle] (xi30) {};
\draw(1.2,3.7) node[fill=black,circle] (xi20) {};
\draw(-0.5,4.5) node[fill=black,circle] (xi10){};

\draw(2.5,0) node[fill=black,circle] (x00) {};
\draw(1,0) node[fill=black,circle] (x11) {};

\draw (0.5,-0.5) rectangle +(2.5,1);
\draw (-1,2) rectangle +(5,3);
\draw (0.5,6.5) rectangle +(2.5,1);

\draw (x0) -- (xi31) -- (x00);
\draw (x0) -- (xi21) -- (x00);
\draw (x0) -- (xi11) -- (x00);

\draw (x1) -- (xi30) -- (x11);
\draw (x1) -- (xi20) -- (x11);
\draw (x1) -- (xi10) -- (x11);

\end{tikzpicture} }

\vskip 0.5cm
\begin{tikzpicture}[scale= 0.8]
\node (0,0) {\phantom{a}};
\draw (1.5,0) node  {$x_{j_1}$};
\draw (4.5,0) node  {$x_{j_2}$};
\draw (8,0) node  {$x_{j_3}$};
\draw (12.5,0) node  {$x_{i}\notin C$};
\node (25,0) {\,};
\end{tikzpicture} 
\end{center}
\caption{The vertical connections among contiguous  \gn{Var}--\gn{Var-in-Cl} and \gn{Var}--\gn{Var-not-in-Cl} gadgets,  for a clause  $C_j=(x_{j_1}, x_{j_2},x_{j_3})$.  \label{fig:grid-gadget-Vert}}
\end{figure}
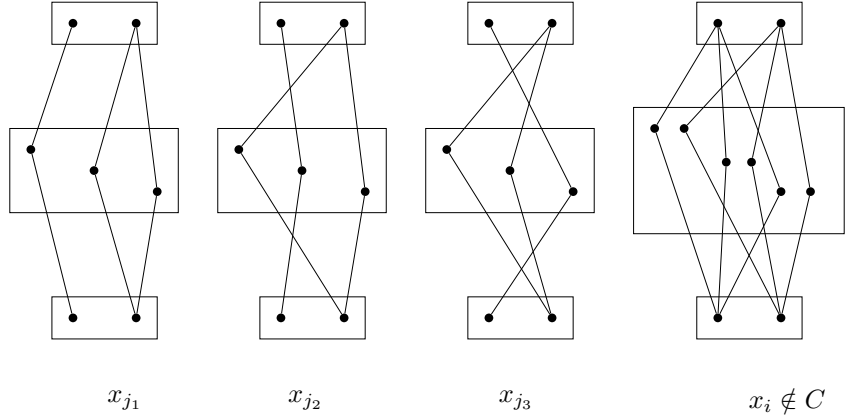

In this section we consider  the \MGRP problem restricted to colored graphs $(G,\vp)$ for which the resulting cluster graph $G_\vp$ is a 2-dimensional grid. Recall that  a \emph{two-dimensional grid graph},  is a lattice graph, obtained as the  Cartesian product of two path graphs, on $n$ and $m$ vertices respectively. Formally,  an $n\times m $ grid graph $L_{n,m}$ has vertex set $\{(i,j)\mid 1\leq i\leq n, 1\leq j\leq m \}$. Two vertices $(i,j)$ and $(i',j')$  are adjacent if and only if $|i - i' | + |j - j'| = 1$. To show that the problem is hard in this case,  we again provide a reduction from  the 1-in-3 Monotone 3-SAT problem. 

\begin{theorem}\label{thm:Grid-NPhard}
The  {\MGRP} problem is NP-complete, for colored graphs whose  cluster graph is a 2-dimensional grid, even when when the cluster size is 6 and the input graph  has bounded degree.  
\end{theorem}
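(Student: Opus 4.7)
The plan is to reduce from 1-in-3 Monotone SAT \cite{schaefer1978}. Given a formula $\Phi$ with variables $x_1,\dots,x_n$ and clauses $C_1,\dots,C_m$, each $C_j=(x_{j_1},x_{j_2},x_{j_3})$ with $j_1<j_2<j_3$, I would build a colored graph $(G,\vp)$ whose cluster graph $G_\vp$ is the grid $L_{2m+1,n+1}$. Rows of even index $2r$, $0\le r\le m$, would be \emph{variable rows}, with a \gn{Var} cluster at column $i$ for each variable $x_i$ and a \gn{Pad} cluster at column $n+1$. Rows of odd index $2j-1$ would be the \emph{clause row} for $C_j$, with a \gn{Var-in-Cl} cluster at column $i$ when $x_i\in C_j$, a \gn{Var-not-in-Cl} cluster at column $i$ otherwise, and the \gn{Cl} cluster for $C_j$ at column $n+1$.

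The second step is to install edges between adjacent clusters following the semantics of Figures~\ref{fig:grid-gadgets} and~\ref{fig:grid-gadget-Vert}. In each \gn{Var} the two vertices $v_i^0,v_i^1$ represent the $0/1$ value of $x_i$; the three vertices $u_j^1,u_j^2,u_j^3$ of the \gn{Cl} of $C_j$ encode which of the three variables is the unique true one; the three vertices $w_1,w_2,w_3$ of a \gn{Var-in-Cl} for $x_{j_p}$ encode the same three candidate true-indices, with $w_p$ adjacent to the $1$-side of both neighbouring \gn{Var}'s and $w_q$ ($q\ne p$) adjacent to their $0$-side; and the six vertices of a \gn{Var-not-in-Cl} are indexed by pairs $(k,b)$ with $k\in\{1,2,3\}$, $b\in\{0,1\}$, where $(k,b)$ is adjacent to the $b$-side of each neighbouring \gn{Var}. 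Horizontally within a clause row I would link only the $k$-labelled vertices of consecutive clusters, treating the pair $\{(k,0),(k,1)\}$ of a \gn{Var-not-in-Cl} as its $k$-interface, so that any realization must agree on a common $k$ across the whole row. Horizontal \gn{Var}--\gn{Var} edges in a variable row would be all-to-all between the two endpoints, and the single vertex of \gn{Pad} would be connected to every vertex of its four grid-adjacent clusters. A direct check confirms that the quotient is exactly $L_{2m+1,n+1}$, that each vertex has constant degree, and that the maximum cluster size is $6$.

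For correctness, the forward direction would take a satisfying 1-in-3 assignment with $T(x_{j_{k_j}})=1$ in $C_j$ and select $v_i^{T(x_i)}$ in each \gn{Var} of $x_i$, $u_j^{k_j}$ in each \gn{Cl}, $w_{k_j}$ in each \gn{Var-in-Cl} of $C_j$, and $(k_j,T(x_i))$ in each \gn{Var-not-in-Cl} of $C_j$ for variable $x_i$. The gadget definitions yield an induced subgraph isomorphic to $G_\vp$. Conversely, given a realization $S$, I would set $T(x_i)=b$ when $v_i^b\in S$ in the topmost \gn{Var}; the vertical edges of every clause-row cluster force the same bit $b$ to be chosen in every lower \gn{Var} of $x_i$, so $T$ is well defined. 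The $k$-indexed horizontal chain in the clause row of $C_j$, together with the three-way split of \gn{Var-in-Cl}, forces the selected $u_j^{k_j}$ to coexist only with $w_{k_j}$ in each \gn{Var-in-Cl} of $C_j$; by construction this makes $x_{j_{k_j}}$ true and the other two variables of $C_j$ false, yielding the desired 1-in-3 assignment.

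The main obstacle will be the careful design of the horizontal edges in each clause row so that the index $k$ chosen in the \gn{Cl} cluster is faithfully and uniquely propagated across \gn{Var-in-Cl} and \gn{Var-not-in-Cl} clusters while simultaneously (a) producing exactly the grid adjacencies in the quotient, (b) respecting the cluster-size bound of $6$, and (c) keeping the degree bounded. The need for $6$ vertices in \gn{Var-not-in-Cl} is intrinsic to this design: this cluster must transmit the variable's bit vertically \emph{and} the clause's index horizontally, and the two signals must remain independent, which is precisely what the $3\times 2$ indexing by $(k,b)$ achieves.
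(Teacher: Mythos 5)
Your construction is essentially the paper's own reduction: the same five gadgets with the same sizes (\gn{Var} with $2$, \gn{Cl} with $3$, \gn{Var-in-Cl} with $3$, \gn{Var-not-in-Cl} with $6$ indexed by a clause-position/bit pair, \gn{Pad} with $1$), the same interleaving of variable rows and clause rows in the grid, the same vertical propagation of the truth bit down each variable column and horizontal propagation of the chosen clause index across each clause row, and the same two-directional correctness argument. The only cosmetic difference is that the paper pads both ends of every row (giving an $(n+2)\times(2m+1)$ grid with a \gn{Cl} gadget at each end of a clause row) while you use a single terminal column; this does not affect correctness.
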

\begin{proof}
Let us describe a reduction from the 1-in-3 Monotone 3-SAT problem.  Assume that $\Phi$ is a monotone 3-SAT formula on   $n$ variables $x_1,\dots,x_n$ having $m$ clauses  $C_1,\dots,C_m$ each with exactly three variables. For simplicity, as we did before,  we assume that  clause $j$, $1\leq j\leq m$, has the form $C_j=(x_{j_1},x_{j_2},x_{j_3})$, with      
 $j_1< j_2< j_3$. 
 We construct a colored graph $(G,\varphi)$  in polynomial time from $\Phi$ and will show that $\Phi$ has a valid truth assignment if and only if  $G_\vp$ has a multicolored realization. 

Our construction uses several gadgets, each one of them describing a color class of $(G,\vp)$, see Figure~\ref{fig:grid-gadgets}. The graph $G$ will be formed by several copies of those gadgets. We locate them inside a 2-dimensional grid, as shown in Figure~\ref{fig:grid-example}.  In this way, it will be clear that $G_\vp$ is indeed a $(n+2)\times (2m+1)$ 2-dimensional grid 

\paragraph{The gadgets}
We use 5 basic gadgets, each used gadget  constitutes  a color class  in $G$ (see Figure~\ref{fig:grid-gadgets}). 
The first kind, the \gn{Var} gadget,  contains $2$ vertices, we refer to them for their positions in the box, left and right. They are used to represent a variable and a  selection of one vertex will correspond to an assignment of value to the variable, left vertex with a  1 and the right one with a 0.  The second gadget,  the \gn{Cl} gadget,  contains $3$ vertices, we refer to them as the upper, middle and lower vertices. They are used to represent a clause. Those three nodes are used to be identify the three valid assignment values for the variables in the clause, upper with 100, middle with 010, and lower with 001. The third and the four  gadgets,  the \gn{Var-in-Cl}  and the \gn{Var-not-in-Cl} gadgets,  contain 3 and 6 vertices respectively. The first one is used for a variable that appears in a clause and the second when it does not appear.  The three node in the \gn{Var-in-Cl} will be referred as upper, middle and lower.  The \gn{Var-not-in-Cl}  block has three groups of two vertices (upper, middle and lower groups), inside each group, we use position (left or right) as reference. Finally, the fifth gadget, the \gn{Pad} gadget contains only one vertex.

\begin{figure}[t]
\begin{center}
\begin{subfigure}[b]{\textwidth}
\centering
\scalebox{0.8}{\vbox{
\begin{tikzpicture}[every node/.style={scale=0.7},scale=0.6] 

\draw(1,0) node[fill=black,circle] (xi3) {};
\draw(1,1.5) node[fill=black,circle] (xi2) {};
\draw(1,3) node[fill=black,circle] (xi1){};
\draw (0.5,-0.5) rectangle +(1,4);

\draw(7,1) node[fill=black,circle] (xi31) {};
\draw(5.5,1.5) node[fill=black,circle] (xi21) {};
\draw(4,2) node[fill=black,circle] (xi11){};
\draw (3.5,0.5) rectangle +(4,2);

\draw (xi1) -- (xi11);
\draw (xi2) -- (xi21);
\draw (xi3) -- (xi31);
\end{tikzpicture}
\hspace{1cm}
\begin{tikzpicture}[every node/.style={scale=0.7},scale=0.6] 

\draw(11,0) node[fill=black,circle] (xi3) {};
\draw(11,1.5) node[fill=black,circle] (xi2) {};
\draw(11,3) node[fill=black,circle] (xi1){};
\draw (10.5,-0.5) rectangle +(1,4);

\draw(7,1) node[fill=black,circle] (xi31) {};
\draw(5.5,1.5) node[fill=black,circle] (xi21) {};
\draw(4,2) node[fill=black,circle] (xi11){};
\draw (3.5,0.5) rectangle +(4,2);

\draw (xi1) -- (xi11);
\draw (xi2) -- (xi21);
\draw (xi3) -- (xi31);
\end{tikzpicture}}}
\vspace{0.25cm}
\caption{\gn{Cl} -- \gn{Var-in-Cl} and \gn{Var-in-Cl} -- \gn{Cl}}
\end{subfigure}

\begin{subfigure}[b]{\textwidth}
\centering
\scalebox{0.8}{\vbox{
\vspace{0.25cm}
\begin{tikzpicture}[every node/.style={scale=0.7},scale=0.6] 

\draw(1,0) node[fill=black,circle] (xi3) {};
\draw(1,1.5) node[fill=black,circle] (xi2) {};
\draw(1,3) node[fill=black,circle] (xi1){};
\draw (0.5,-0.5) rectangle +(1,4);

\draw(7.5,0) node[fill=black,circle] (xi31) {};
\draw(6,1.5) node[fill=black,circle] (xi21) {};
\draw(4.5,3) node[fill=black,circle] (xi11){};
\draw(6.5,-0.5) node[fill=black,circle] (xi30) {};
\draw(5,1) node[fill=black,circle] (xi20) {};
\draw(3.5,2.5) node[fill=black,circle] (xi10){};
\draw (3,-1) rectangle +(5,5);
\draw (xi1) -- (xi11);
\draw (xi2) -- (xi21);
\draw (xi3) -- (xi31);
\draw (xi1) -- (xi10);
\draw (xi2) -- (xi20);
\draw (xi3) -- (xi30);
\end{tikzpicture}
\hspace{1cm}
\begin{tikzpicture}[every node/.style={scale=0.7},scale=0.6] 

\draw(7,0) node[fill=black,circle] (xi3) {};
\draw(7,1.5) node[fill=black,circle] (xi2) {};
\draw(7,3) node[fill=black,circle] (xi1){};
\draw (6.5,-0.5) rectangle +(1,4);

\draw(3.5,0) node[fill=black,circle] (xi31) {};
\draw(2,1.5) node[fill=black,circle] (xi21) {};
\draw(0.5,3) node[fill=black,circle] (xi11){};
\draw(2.5,-0.5) node[fill=black,circle] (xi30) {};
\draw(1,1) node[fill=black,circle] (xi20) {};
\draw(-0.5,2.5) node[fill=black,circle] (xi10){};
\draw (-1,-1) rectangle +(5,5);
\draw (xi1) -- (xi11);
\draw (xi2) -- (xi21);
\draw (xi3) -- (xi31);
\draw (xi1) -- (xi10);
\draw (xi2) -- (xi20);
\draw (xi3) -- (xi30);
\end{tikzpicture}}}

\vspace{0.25cm}
\caption{\gn{Cl} -- \gn{Var-not-in-Cl} and \gn{Var-not-in-Cl} -- \gn{Cl}}
\end{subfigure}

\begin{subfigure}[b]{\textwidth}
\centering
\vspace{0.25cm}
\scalebox{0.8}{\vbox{
\begin{tikzpicture}[every node/.style={scale=0.7},scale=0.6] 

\draw(3,4) node[fill=black,circle] (xi3) {};
\draw(1.5,4.5) node[fill=black,circle] (xi2) {};
\draw(0,5) node[fill=black,circle] (xi1){};

\draw(8,4) node[fill=black,circle] (xi31) {};
\draw(6.5,4.5) node[fill=black,circle] (xi21) {};
\draw(5,5) node[fill=black,circle] (xi11){};

\draw (-0.5,3.5) rectangle +(4,2);
\draw (4.5,3.5) rectangle +(4,2);

\draw (xi1) -- (xi11);
\draw (xi2) -- (xi21);
\draw (xi3) -- (xi31);

\draw(0,2.5) node {~};
\end{tikzpicture}
\hspace{0.25cm}
\begin{tikzpicture}[every node/.style={scale=0.7},scale=0.6] 
\draw(3.5,3) node[fill=black,circle] (xi31) {};
\draw(2,4.5) node[fill=black,circle] (xi21) {};
\draw(0.5,6) node[fill=black,circle] (xi11){};
\draw(2.5,2.5) node[fill=black,circle] (xi30) {};
\draw(1,4) node[fill=black,circle] (xi20) {};
\draw(-0.5,5.5) node[fill=black,circle] (xi10){};

\draw(9.5,3) node[fill=black,circle] (xi311) {};
\draw(8,4.5) node[fill=black,circle] (xi211) {};
\draw(6.5,6) node[fill=black,circle] (xi111){};
\draw(8.5,2.5) node[fill=black,circle] (xi301) {};
\draw(7,4) node[fill=black,circle] (xi201) {};
\draw(5.5,5.5) node[fill=black,circle] (xi101){};

\draw (-1,2) rectangle +(5,5);
\draw (5,2) rectangle +(5,5);

\draw (xi31) -- (xi311);
\draw (xi31) -- (xi301);
\draw (xi30) -- (xi311);
\draw (xi30) -- (xi301);

\draw (xi21) -- (xi211);
\draw (xi21) -- (xi201);
\draw (xi20) -- (xi211);
\draw (xi20) -- (xi201);

\draw (xi11) -- (xi111);
\draw (xi11) -- (xi101);
\draw (xi10) -- (xi111);
\draw (xi10) -- (xi101);
\end{tikzpicture} 
}}
\vspace{0.25cm}
\caption{\gn{Var-in-Cl} -- \gn{Var-in-Cl} and \gn{Var-not-in-Cl} -- \gn{Var-not-in-Cl}}
\end{subfigure}

\begin{subfigure}[b]{\textwidth}
\centering
\vspace{0.25cm}
\scalebox{0.8}{\vbox{
\begin{tikzpicture}[every node/.style={scale=0.7},scale=0.6] 
\draw(8,4) node[fill=black,circle] (xi3) {};
\draw(6.5,4.5) node[fill=black,circle] (xi2) {};
\draw(5,5) node[fill=black,circle] (xi1){};

\draw(3.5,3) node[fill=black,circle] (xi311) {};
\draw(2,4.5) node[fill=black,circle] (xi211) {};
\draw(0.5,6) node[fill=black,circle] (xi111){};
\draw(2.5,2.5) node[fill=black,circle] (xi301) {};
\draw(1,4) node[fill=black,circle] (xi201) {};
\draw(-0.5,5.5) node[fill=black,circle] (xi101){};

\draw (4.5,3.5) rectangle +(4,2);
\draw (-1,2) rectangle +(5,5);

\draw (xi3) -- (xi311);
\draw (xi3) -- (xi301);
\draw (xi2) -- (xi211);
\draw (xi2) -- (xi201);
\draw (xi1) -- (xi111);
\draw (xi1) -- (xi101);
\end{tikzpicture} 
\hspace{0.25cm}
\begin{tikzpicture}[every node/.style={scale=0.7},scale=0.6] 
\draw(3,4) node[fill=black,circle] (xi3) {};
\draw(1.5,4.5) node[fill=black,circle] (xi2) {};
\draw(0,5) node[fill=black,circle] (xi1){};

\draw(9.5,3) node[fill=black,circle] (xi311) {};
\draw(8,4.5) node[fill=black,circle] (xi211) {};
\draw(6.5,6) node[fill=black,circle] (xi111){};
\draw(8.5,2.5) node[fill=black,circle] (xi301) {};
\draw(7,4) node[fill=black,circle] (xi201) {};
\draw(5.5,5.5) node[fill=black,circle] (xi101){};

\draw (-0.5,3.5) rectangle +(4,2);
\draw (5,2) rectangle +(5,5);

\draw (xi3) -- (xi311);
\draw (xi3) -- (xi301);
\draw (xi2) -- (xi211);
\draw (xi2) -- (xi201);
\draw (xi1) -- (xi111);
\draw (xi1) -- (xi101);
\end{tikzpicture} }}
\vspace{0.25cm}
\caption{\gn{Var-not-in-Cl} -- \gn{Var-in-Cl} and \gn{Var-in-Cl} -- \gn{Var-not-in-Cl}}
\end{subfigure}

\end{center}
\caption{The horizontal connections among contiguous gadgets in a clause row.    \label{fig:grid-gadget-Horz}}
\end{figure}

\clearpage
\paragraph{The color classes}
We describe first the color classes of  the graph $(G,\vp)$.  Each color class corresponds to one of the basic gadgets.  An example of the construction is given in  Figure~\ref{fig:grid-example}. The cluster graph (described here as a grid of  gadgets)  has one column for each variable (in the order $x_1 \dots x_n$) and two additional columns, first and last. The upper row starts and ends with a Pad gadget and it has  one Var gadget in each of the columns, i.e., one  for each variable.

For each clause, we create two consecutive rows in the grid, following the order of  the clauses $C_1, \dots C_m$.  The upper row associated to a clause $C_j$, starts an ends with a \gn{Cl} gadget.  At column $i$, we place  a   \gn{Var-in-Cl} gadget, if $x_i\in C_j$,  or a \gn{Var-not-in-Cl}  gadget, otherwise. The lower row associated to $C_j$ starts and ends with a \gn{Pad} gadget and contains one \gn{Var} gadget for each variable.

\paragraph{The connections among vertices}
The connections among vertices in the different color classes  depends on the type of gadget and on whether the two color classes  are connected in the grid vertically or horizontally.  Let us start with the vertical connections.
The vertex in a \gn{Pad} gadget is connected to all the vertices in the vertically contiguous \gn{Cl} gadgets. The vertical connections of  a \gn{Var-in-Cl}  or  a \gn{Var-not-in-Cl} gadget and its upper  and lower \gn{Var} gadgets are given in Figure~\ref{fig:grid-gadget-Vert}.

The horizontal connections are the following:  The vertex in a \gn{Pad} gadget is connected to all the vertices in the horizontally contiguous \gn{Var} gadget. The vertices in two horizontally contiguous  \gn{Var} gadgets  are connected by a complete bipartite graph.  The other horizontal connections corresponds to  contiguous pairs of  gadgets from the  types   \gn{Cl},  \gn{Var-in-Cl} and \gn{Var-not-in-Cl}. The connections among all the possible combinations of such pairs are described in Figure~\ref{fig:grid-gadget-Horz}.


Note that, the vertical connections described in Figure~\ref{fig:grid-gadget-Vert} guarantee that the left (right) vertex in a \gn{Var} gadget is connected by a path only to a left (right) vertex in another \gn{Var} gadget in the same column.  Furthermore, the horizontal connections, as described in  \ref{fig:grid-gadget-Horz}, always join vertices in the same vertical position  (upper, middle or lower).  

\begin{figure}[t]
\scalebox{0.35}{\vbox{
\begin{center}
\begin{tikzpicture}[every node/.style={shape=circle,fill=black,scale=0.7}] 
\draw (-3,0.5) rectangle +(1,1);
\draw (-2.5,1) node  (c00) {};
\draw[red]  (-2.5,1) circle (0.3cm);
\draw (-3,3.5) rectangle +(1,2);
\draw (-2.5,5) node (c101) {};
\draw[red]  (-2.5,5) circle (0.3cm);
\draw (-2.5,4.5) node (c102) {}; 
\draw (-2.5,4) node (c103) {};

\draw (2,0.5) rectangle +(2,1);
\draw (2.5,1) node (c011) {};
\draw (3.5,1) node (c012) {}; 
\draw[red]  (3.5,1) circle (0.3cm);

\draw (1,3) rectangle +(5,3);
\draw (2,5) node (c111) {};
\draw (2.5,5) node (c112) {};
\draw[red]  (2.5,5) circle (0.3cm); 
\draw (3.5,4.5) node (c113) {};
\draw (4,4.5) node (c114) {};
\draw (5,4) node (c115) {};
\draw (5.5,4) node (c116) {};

\draw (9,0.5) rectangle +(2,1);
\draw (9.5,1) node (c021) {};
\draw[red]  (9.5,1) circle (0.3cm);
\draw (10.5,1) node (c022) {}; 

\draw (8.5,3.5) rectangle +(3,2);
\draw (9,5) node (c121) {};
\draw[red]  (9,5) circle (0.3cm);
\draw (10,4.5) node (c122) {}; 
\draw (11,4) node (c123) {};

\draw (15,0.5) rectangle +(2,1);
\draw (15.5,1) node (c031) {};
\draw (16.5,1) node (c032) {}; 
\draw[red]  (16.5,1) circle (0.3cm);

\draw (14.5,3.5) rectangle +(3,2);
\draw (15,5) node (c131) {};
\draw[red]  (15,5) circle (0.3cm);
\draw (16,4.5) node (c132) {}; 
\draw (17,4) node (c133) {};

\draw (22,0.5) rectangle +(2,1);
\draw (22.5,1) node (c041) {};
\draw (23.5,1) node (c042) {}; 
\draw[red]  (23.5,1) circle (0.3cm);

\draw (21.5,3.5) rectangle +(3,2);
\draw (22,5) node (c141) {};
\draw[red]  (22,5) circle (0.3cm);
\draw (23,4.5) node (c142) {}; 
\draw (24,4) node (c143) {};

\draw (28,0.5) rectangle +(1,1);
\draw (28.5,1) node  (c05) {};
\draw[red]  (28.5,1) circle (0.3cm);

\draw (28,3.5) rectangle +(1,2);
\draw (28.5,5) node (c151) {};
\draw[red]  (28.5,5) circle (0.3cm);
\draw (28.5,4.5) node (c152) {}; 
\draw (28.5,4) node (c153) {};


\draw (-3,8.5) rectangle +(1,1);
\draw (-2.5,9) node  (c20) {};
\draw[red]  (-2.5,9) circle (0.3cm);
\draw (-3,11.5) rectangle +(1,2);
\draw (-2.5,13) node (c301) {};
\draw (-2.5,12.5) node (c302) {}; 
\draw[red]  (-2.5,12.5) circle (0.3cm);
\draw (-2.5,12) node (c303) {};

\draw (2,8.5) rectangle +(2,1);
\draw (2.5,9) node (c211) {};
\draw (3.5,9) node (c212) {};
\draw[red]  (3.5,9) circle (0.3cm); 

\draw (1.5,11.5) rectangle +(3,2);
\draw (2,13) node (c311) {};
\draw (3,12.5) node (c312) {};
\draw[red]  (3,12.5) circle (0.3cm); 
\draw (4,12) node (c313) {};

\draw (9,8.5) rectangle +(2,1);
\draw (9.5,9) node (c221) {};
\draw[red]  (9.5,9) circle (0.3cm);
\draw (10.5,9) node (c222) {};

\draw (8.5,11.5) rectangle +(3,2);
\draw (9,13) node (c321) {};
\draw (10,12.5) node (c322) {}; 
\draw[red]  (10,12.5) circle (0.3cm);
\draw (11,12) node (c323) {};

\draw (15,8.5) rectangle +(2,1);
\draw (15.5,9) node (c231) {};
\draw (16.5,9) node (c232) {}; 
\draw[red]  (16.5,9) circle (0.3cm);

\draw (14.5,11.5) rectangle +(3,2);
\draw (15,13) node (c331) {};
\draw (16,12.5) node (c332) {};
\draw[red]  (16,12.5) circle (0.3cm); 
\draw[red]  (16,12.5) circle (0.3cm);
\draw (17,12) node (c333) {};

\draw (22,8.5) rectangle +(2,1);
\draw (22.5,9) node (c241) {};
\draw (23.5,9) node (c242) {}; 
\draw[red]  (23.5,9) circle (0.3cm);

\draw (20.5,11) rectangle +(5,3);
\draw (21.5,13) node (c341) {};
\draw (22,13) node (c342) {}; 
\draw (23,12.5) node (c343) {};
\draw (23.5,12.5) node (c344) {};
\draw[red]  (23.5,12.5) circle (0.3cm);
\draw (24.5,12) node (c345) {};
\draw (25,12) node (c346) {};

\draw (28,8.5) rectangle +(1,1);
\draw (28.5,9) node (c25) {};
\draw[red]  (28.5,9) circle (0.3cm);
\draw (28,11.5) rectangle +(1,2);
\draw (28.5,13) node (c351) {};
\draw (28.5,12.5) node (c352) {}; 
\draw[red]  (28.5,12.5) circle (0.3cm);
\draw (28.5,12) node (c353) {};

\draw (-3,16) rectangle +(1,1);
\draw (-2.5,16.5) node (c40) {};
\draw[red]  (-2.5,16.5) circle (0.3cm);

\draw (2,16) rectangle +(2,1);
\draw (2.5,16.5) node (c411) {};
\draw (3.5,16.5) node (c412) {}; 
\draw[red]  (3.5,16.5) circle (0.3cm);

\draw (9,16) rectangle +(2,1);
\draw (9.5,16.5) node (c421) {};
\draw[red]  (9.5,16.5) circle (0.3cm);
\draw (10.5,16.5) node (c422) {}; 

\draw (15,16) rectangle +(2,1);
\draw (15.5,16.5) node (c431) {};
\draw (16.5,16.5) node (c432) {}; 
\draw[red]  (16.5,16.5) circle (0.3cm);

\draw (22,16) rectangle +(2,1);
\draw (22.5,16.5) node (c441) {};
\draw (23.5,16.5) node (c442) {}; 
\draw[red]  (23.5,16.5) circle (0.3cm);

\draw (28,16) rectangle +(1,1);
\draw (28.5,16.5) node (c45) {};
\draw[red]  (28.5,16.5) circle (0.3cm);

\path[draw]  (c40) edge [bend left] (c411)  edge [bend left] (c412);
\path[draw] (c411) edge [bend left] (c421) edge [bend left] (c422);
\path[draw] (c412) edge [bend left] (c421) edge [bend left] (c422);
\path[draw] (c421) edge [bend left] (c431) edge [bend left] (c432);
\path[draw] (c422) edge [bend left] (c431) edge [bend left] (c432);
\path[draw] (c431) edge [bend left] (c441) edge [bend left] (c442);
\path[draw] (c432) edge [bend left] (c441) edge [bend left] (c442);
\path[draw] (c441) edge [bend left] (c45);
\path[draw] (c442) edge [bend left] (c45);

\path[draw] (c301) -- (c311) -- (c321) -- (c331) -- (c341);
\path[draw] (c302) -- (c312) -- (c322) -- (c332) -- (c343);
\path[draw] (c303) -- (c313) -- (c323) -- (c333) -- (c345);
\path[draw] (c342) -- (c351); 
\path[draw] (c344) -- (c352); 
\path[draw] (c346) -- (c353); 

\path[draw] (c331) edge [bend left] (c342); 
\path[draw] (c332) edge [bend left] (c344); 
\path[draw] (c346) edge [bend left] (c333); 

\path[draw] (c341) edge [bend left] (c351);
\path[draw] (c343) edge [bend left] (c352);
\path[draw] (c353) edge [bend left] (c345);
 
\path[draw] (c25) edge [bend left] (c241)  edge [bend left] (c242);
\path[draw] (c241) edge [bend left] (c231) edge [bend left] (c232);
\path[draw] (c242) edge [bend left] (c231) edge [bend left] (c232);
\path[draw] (c231) edge [bend left] (c221) edge [bend left] (c222);
\path[draw] (c232) edge [bend left] (c221) edge [bend left] (c222);
\path[draw] (c221) edge [bend left] (c211) edge [bend left] (c212);
\path[draw] (c222) edge [bend left] (c211) edge [bend left] (c212);
\path[draw] (c211) edge [bend left] (c20);
\path[draw] (c212) edge [bend left] (c20);

\path[draw] (c112) -- (c121) -- (c131) -- (c131) -- (c141) -- (c151);
\path[draw] (c114) -- (c122) -- (c132) -- (c132) -- (c142) -- (c152);
\path[draw] (c116) -- (c123) -- (c133) -- (c133) -- (c143) -- (c153);
\path[draw] (c101) -- (c111); 
\path[draw] (c102) -- (c113); 
\path[draw] (c103) -- (c115); 

\path[draw] (c101) edge [bend left] (c112); 
\path[draw] (c114) edge [bend left] (c102); 
\path[draw] (c116) edge [bend left] (c103); 

\path[draw] (c111) edge [bend left] (c121);
\path[draw] (c122) edge [bend left] (c113);
\path[draw] (c123) edge [bend left] (c115);

\path[draw] (c05) edge [bend left] (c041)  edge [bend left] (c042);
\path[draw] (c041) edge [bend left] (c031) edge [bend left] (c032);
\path[draw] (c042) edge [bend left] (c031) edge [bend left] (c032);
\path[draw] (c031) edge [bend left] (c021) edge [bend left] (c022);
\path[draw] (c032) edge [bend left] (c021) edge [bend left] (c022);
\path[draw] (c021) edge [bend left] (c011) edge [bend left] (c012);
\path[draw] (c022) edge [bend left] (c011) edge [bend left] (c012);
\path[draw] (c011) edge [bend left] (c00) ;
\path[draw] (c012) edge [bend left] (c00);


\path[draw] (c40) edge (c301) edge [bend left] (c302);
\path[draw] (c303) edge [bend left] (c40) ;
\path[draw] (c303) -- (c20) -- (c101);
\path[draw] (c302) edge [bend left] (c20);
\path[draw] (c20) edge [bend left] (c301);

\path[draw] (c20) edge (c101) edge [bend left] (c102);
\path[draw] (c103) edge [bend left] (c20);
\path[draw] (c103) -- (c00);
\path[draw] (c102) edge [bend left] (c00);
\path[draw] (c00) edge [bend left] (c101);

\path[draw] (c45) edge (c351) edge [bend left] (c352);
\path[draw] (c353) edge [bend left] (c45) ;
\path[draw] (c353) -- (c25) -- (c151);
\path[draw] (c352) edge [bend left] (c25);
\path[draw] (c25) edge [bend left] (c351);

\path[draw] (c25) edge (c151) edge [bend left] (c152);
\path[draw] (c153) edge [bend left] (c25);
\path[draw] (c153) -- (c05);
\path[draw] (c152) edge [bend left] (c05);
\path[draw] (c05) edge [bend left] (c151);

\path[draw] (c411) -- (c311) -- (c211);
\path[draw] (c412) -- (c312) -- (c212);
\path[draw] (c412) -- (c313) -- (c212);

\path[draw] (c421) -- (c322) -- (c221);
\path[draw] (c422) -- (c321) -- (c222);
\path[draw] (c422) -- (c323) -- (c222);

\path[draw] (c431) -- (c333) -- (c231);
\path[draw] (c432) -- (c331) -- (c232);
\path[draw] (c432) -- (c332) -- (c232);

\path[draw] (c441) -- (c341) -- (c241);
\path[draw] (c442) -- (c342) -- (c242);
\path[draw] (c441) -- (c343) -- (c241);
\path[draw] (c442) -- (c344) -- (c242);
\path[draw] (c441) -- (c345) -- (c241);
\path[draw] (c442) -- (c346) -- (c242);

\path[draw] (c211) -- (c111) -- (c011);
\path[draw] (c212) -- (c112) -- (c012);
\path[draw] (c211) -- (c113) -- (c011);
\path[draw] (c212) -- (c114) -- (c012);
\path[draw] (c211) -- (c115) -- (c011);
\path[draw] (c212) -- (c116) -- (c012);

\path[draw] (c221) -- (c121) -- (c021);
\path[draw] (c222) -- (c122) -- (c022);
\path[draw] (c222) -- (c123) -- (c022);

\path[draw] (c231) -- (c132) -- (c031);
\path[draw] (c232) -- (c131) -- (c032);
\path[draw] (c232) -- (c133) -- (c032);

\path[draw] (c241) -- (c143) -- (c041);
\path[draw] (c242) -- (c141) -- (c042);
\path[draw] (c242) -- (c142) -- (c042);

\end{tikzpicture} 
\end{center}}}
\caption{The colored graph obtained from the monotone formula $\Phi=((x_1,x_2,x_3),(x_2,x_3,x_4))$. The circled vertices form a multicolored realization of the cluster graph.  \label{fig:grid-example}}
\end{figure}
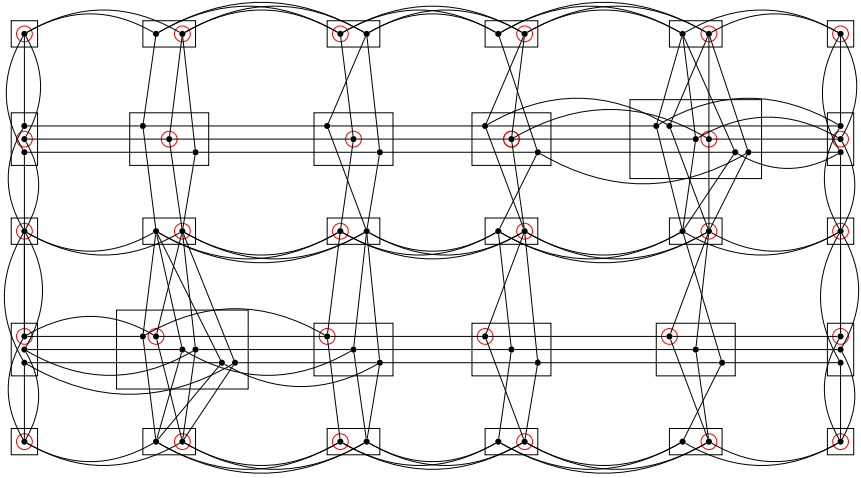

\medskip
\paragraph{Correctness of the reduction}
Observe that the graph together with the coloring  can be constructed in polynomial time. 
 
\remove{  Furthermore, by interpreting that the label of the selected vertex in the Var gadget associated to variable $x_i$ indicates the truth value assigned  to the variable,  we can show that the construction is indeed a reduction. }

Let us start proving that when $\Phi$ is a yes instance of the 1-in-3 Monotone 3-SAT problem, the constructed colored graph admits a multicolored realization. 
Let $T$ be a valid  assignment to $\Phi$, i.e., letting $T(x_i)=t_i\in \{0,1\}$,  for $1\leq i \leq n$,  exactly one variable in each clause in $\Phi$ is set to $1$.  

We define a set $S\subseteq V$ as follows.
\begin{itemize}
\item The unique vertex in any \gn{Pad} gadget belong to $S$.
\item For each  \gn{Var} gadget corresponding to  variable $x_i$ we add to $S$  the  left vertex, if $t_i=1$, or the   right vertex, if $t_i=0$. 
\item Consider a   clause $C_j=(x_{j_1},x_{j_2},x_{j_3})$.  
 \begin{itemize}
 \item From the \gn{Cl} and the \gn{Var-in-Cl} gadgets in the  column associated to $C_j$, we add to $S$  the upper vertex, if $t_{j_1}t_{j_2}t_{j_3}=100$, the middle one, if $t_{j_1}t_{j_2}t_{j_3}=010$,  or the lower one, if $t_{j_1}t_{j_2}t_{j_3}=001$.
 \item From a variable $x_i$ that does not appear in $C_j$, we select the upper, middle of lower block, depending on whether $t_{j_1}t_{j_2}t_{j_3}$ is 100, 010 or 001. Inside the selected block, we add to $S$ the left vertex if $x_i=1$ or  the right vertex if $x_i=0$ to be added to $S$.  
 \end{itemize} 
\end{itemize}
Note that,  $S$ contains exactly one vertex from each color class in $(G,\vp)$ so, it is multicolored.  It remains to  show that $S$ is a realization of the cluster graph. Let us first look to the horizontal connections.  The circled vertices in the colored graph given in Figure \ref{fig:grid-example} are the set $S$ associated to the assignment $x_1= 0$, $x_2=1$, $x_3=0$ and $x_4=0$.

For the rows that alternate \gn{Var} clauses the connections are all to all,  therefore  $S$ realizes all the corresponding connections in the cluster graph. The same happens for the first and the last columns.

Consider a clause  $C_j=(x_{j_1},x_{j_2},x_{j_3})$ and the  corresponding assigned values    $t_{j_1}t_{j_2}t_{j_3}$. The vertices in $S$ from the \gn{Cl}, \gn{Var-in-Cl} and \gn{Var-not-in-Cl} in the row are all in the same vertical position (upper, middle, or lower).  Therefore, according to the horizontal connections, the horizontal path on this row is  realized by $S$  (see Figs.~\ref{fig:grid-gadget-Horz} and \ref{fig:grid-example}). 
  
Consider a variable $x_i$ with assigned value $t_i$, observe that depending on the value $t_i$, the selected vertices are all on the left ($t_i=1$) or on the right ($t_i=0$).  Therefore, according to the vertical connections (see  Figs.~\ref{fig:grid-gadget-Vert} and \ref{fig:grid-example}), the vertical  path on this column  is  realized by $S$.  We conclude that $S$ is a multicolored realization for $G_\vp$

To show the opposite direction, that when  $G_\vp$  has a multicolored realization $S$, $\Phi$ has a truth assignment $T$ in which each clause gets exactly 1 variable with assigned value 1.    We define $T$ as follows, consider the \gn{Var} gadgets on the top row, we  set  $T(x_i)=t_i$, being $t_i=1$ when $S$ contains the left vertex in the gadget and $t_i=0$ otherwise.  As $S$ is multicolored,  each $x_i$ is assigned a single truth value.   In Figure \ref{fig:grid-example}, the multicolored set defined by the circled vertices translates to the truth assignment $x_1= 0$, $x_2=1$, $x_3=0$ and $x_4=0$.  

Consider a column corresponding to a variable $x_i$ assigned to value $t_i$.  When the vertex selected on the top row gadget is the left (right) one, the vertical connections only allow vertical paths  that go through left (right) vertices in the gadgets in the column (see Figure~\ref{fig:grid-gadget-Vert}).  Therefore, for the \gn{Var} gadgets in column $i$,  when $t_i=1$, $S$ contains all the left vertices, and,  when $t_i=0$,  $S$ contains  all the right vertices. 

Consider a clause  $C_j=(x_{j_1},x_{j_2},x_{j_3})$.  And consider the vertical position (upper, middle or lower) of the vertex in the leftmost \gn{Cl}  gadget in the corresponding row. Recall that, according to  the definition of the horizontal connections, a horizontal complete path in the cluster graph can only contain vertices in the same vertical position in all the gadgets.  
Consider the path $p$ induced by the vertices in $S$ from the clusters in the associated row starting from the left. If $p$ starts in the upper vertex, it contains the upper vertices of the \gn{Var-in-Cl} gadget associated to the variables $x_{j_1}$, $x_{j_2}$ and $x_{j_3}$. The first one is connected only to the left vertex on the vertically contiguous \gn{Var} gadgets, while the other two are connected only to the right vertex on the vertically contiguous \gn{Var} gadgets (see Figure~\ref{fig:grid-gadget-Vert}). So, we get that $t_{j_1}=1$,  $t_{j_2}=0$, and $t_{j_3}=0$. A similar argument shows that when $p$ starts in the middle (lower) vertex, then $t_{j_1}=0$,  $t_{j_2}=1$, and $t_{j_3}=0$ ($t_{j_1}=0$,  $t_{j_2}=0$, and $t_{j_3}=1$). Therefore,  the constructed formula is a yes instance of the 1-in-3 Monotone SAT problem.
\end{proof}

\section{Colored graphs with bounded cluster size}\label{sec:5}

We start presenting  a polynomial time algorithm for the the particular case of the \MGRP problem in which the colored graph has cluster size  at most 2.  Later,  we  show that the problem  becomes NP-complete for graphs with cluster size larger than 2,  thus providing a complexity dichotomy with respect to cluster size.  In order to get the result, we provide a reduction  to the 2-SAT problem: given a boolean formula $\Phi$ in CNF with at most two literals per clause, decide whether $\Phi$ has a satifying assignment. Recall that the 2-SAT problem can be  solved in polynomial time \cite{Garey79}.

\begin{proposition}\label{prop:size2}
 The {\MGRP} problem is polynomial time solvable for colored graphs with cluster size at most 2. 
\end{proposition}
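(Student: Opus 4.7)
The plan is to reduce \MGRP, restricted to instances with cluster size at most $2$, to 2-SAT. Since 2-SAT is polynomial-time solvable \cite{Garey79}, this immediately yields the proposition. Let $(G,\vp)$ have color classes $A_1,\dots,A_k$, each of size at most $2$. For each class $A_i=\{v_i^0,v_i^1\}$ of size $2$, introduce a Boolean variable $x_i$; the intended meaning is that $x_i=b$ signals that the multicolored realization picks the vertex $v_i^b\in A_i$. Singleton classes force their unique vertex with no variable needed.

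The reduction hinges on the observation that for $(i,j)\notin E(G_\vp)$ the very definition of the cluster graph forbids any edge of $G$ between $A_i$ and $A_j$, so the non-adjacency required in a realization is automatic and no clause is needed. Only the edges of $G_\vp$ generate constraints. For each $(i,j)\in E(G_\vp)$ and each pair $(v_i^a,v_j^b)\in A_i\times A_j$ that is \emph{not} an edge of $G$, add the 2-clause $(x_i\neq a)\vee(x_j\neq b)$ to a formula $\Phi$. There are at most four such pairs per edge of $G_\vp$, so $\Phi$ has $O(|E(G_\vp)|)$ clauses and can be built in polynomial time. If one of $A_i,A_j$ is a singleton, the corresponding literal becomes a constant and the clause degenerates to a unit clause, which 2-SAT handles directly.

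For correctness, a multicolored realization $S=\{u_1,\dots,u_k\}$ yields the assignment $x_i=b$ where $u_i=v_i^b$: for every edge $(i,j)\in E(G_\vp)$ we have $(u_i,u_j)\in E(G)$, so no clause involving $(i,j)$ is falsified, and $\Phi$ is satisfied. Conversely, a satisfying assignment of $\Phi$ defines $u_i=v_i^{x_i}$, and for every $(i,j)\in E(G_\vp)$ the pair $(u_i,u_j)$ cannot be a non-edge of $G$ without violating the associated clause; hence $(u_i,u_j)\in E(G)$, and together with the automatic non-adjacency on non-edges of $G_\vp$ this shows that $\{u_i\}$ realizes $G_\vp$. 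The main point to double-check when writing this up is precisely the bidirectional translation between edge constraints and 2-clauses; no truly delicate step arises, since cluster size $2$ matches the arity of 2-SAT exactly, and for every $(i,j)\in E(G_\vp)$ at least one ``good'' pair is present by definition of the cluster graph, so the local constraints are not self-contradictory.
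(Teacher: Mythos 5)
Your proposal is correct and follows essentially the same route as the paper: a polynomial-time reduction to 2-SAT with one Boolean variable per two-element color class, clauses forbidding the selection of any non-adjacent pair across an edge of $G_\vp$, unit clauses for singletons, and the observation that non-edges of $G_\vp$ impose no constraint because the cluster graph definition already rules out edges of $G$ between such classes. The only differences are notational (you encode the choice as $x_i=b$ rather than assigning complementary literals to the two vertices of a class, as the paper does).
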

\begin{proof}
Let $(G,\vp)$ be a colored graph with cluster size  $s(G,\vp)=2$.  Let $A_1, \dots,A_k$ be the color classes in  $(G,\vp)$. 

From $(G,\vp)$, we create an instance $\Phi$ of 2-SAT as  follows.  $\Phi$ has variables $x_1,\dots x_k$. We associate to each vertex in $u\in V$ a literal $l_u$.   
If $A_i= \{u\}$,  $l_u=x_i$.  If $A_i=\{u,v\}$,  $l_u = x_i$ and $l_v=\neg x_i$.
The clauses in $\Phi$ are the following. For each color class  with $|A_i|=1$, we add clause $x_i$  
For each edge  $(A_i,A_j)\in  E(G_\vp)$,  we add one clause for each missing edge among vertices in the color classes, i.e., for $u\in A_i$, $w\in A_j$ with $(u,w)\notin E(G)$, we  add the clause  $(\neg l_u \wedge  \neg l_w)$.  Observe that this clause will be satisfied only when at least one of its two literals is assigned value 0. 

An example of the construction is given in  Figure \ref{fig:2-sat}.
Note that  $\Phi $ is a 2-SAT instance and it can be constructed in polynomial time. 
Let us show  that   $(G,\vp)$ admits a multicolored realization if and only if $\Phi$ is satisfiable. 

Assume that  $(G,\vp)$ admits a multicolored realization $S$. Let us  consider  the assignment $T$ of  truth values to the variables of $\Phi$ that makes the literals associated to the vertices in $S$  get value 1, as  $S$ is multicolored, $T$ is a valid assignment for $\Phi$. 
As $S$ is multicolored it contains all the vertices in color classes with only one vertex. Therefore, $T$ satisfies all the clauses in $\Phi$ with one literal.

Consider a connection $(A_i,A_j)\in  E(G_\vp)$ with some associated clause in $\Phi$. For $u\in A_i$, $w\in A_j$ with $(u,w)\notin E(G)$, at least one of the vertices $u$ or $w$ cannot belong to $S$. Therefore at least one of $l_u$ or $l_w$ gets value 0 under $t_S$. Therefore, the clause  $(\neg l_u \wedge  \neg l_w)$ is satisfied by $T$. We conclude that $T$ satisfies $\Phi$

For the other direction, assume that $T$ is a satisfying assignment for $\Phi$.  Consider the set of vertices $S$ that contains those vertices $u\in V(G)$ such that $T(l_u)=1$. 
As $\alpha$ is an assignment, then $S$ is multicolored. 
Let $S= \{l_1,\dots l_k\}$.  Assume that $S$ is  not a multicolored realization of $(G,\vp)$. In such a case, there must be an edge $(A_i,A_j)\in  E(G_\vp)$ with $(l_i, l_j)\notin E(G)$. In such a case, the clause $(\neg l_i \wedge  \neg l_j)$  will not be satisfied, contradicting the fact that $\alpha$ is a satisfying assignment.   We conclude that $S$ is a multicolored realization of   $(G,\vp)$.
\end{proof}

\begin{figure}
\begin{center}\scalebox{0.8}{
\phantom{aa}
\begin{tikzpicture}
\tikzstyle{petit} = [draw, circle, fill= black,scale=0.5]
\draw (0,1.5) rectangle +(1,1);
\draw (0.5,2) node[petit]  (c0) {};
\draw (0.5,2.7) node[scale= 1] {$x_1$};
\draw[red]  (0.5,2) circle (0.3cm);
\draw (1,3.5) rectangle +(2,1);
\draw (1.5,4) node[petit] (c11) {};
\draw (1.5,4.7) node[scale= 1] {$x_2$};
\draw[red]  (1.5,4) circle (0.3cm);
\draw (2.5,4) node[petit] (c12) {}; 
\draw (2.5,4.7) node[scale= 1] {$\neg x_2$};
\draw (4,3.5) rectangle +(1,1);
\draw (4.5,4) node[petit]  (c2) {};
\draw (4.5,4.7) node[scale= 1] {$x_5$};
\draw[red]  (4.5,4) circle (0.3cm);
\draw (6,1.5) rectangle +(2,1);
\draw (6.5,2) node[petit]  (c31) {};
\draw (6.5,2.7) node[scale= 1] {$x_4$};
\draw[red]  (6.5,2) circle (0.3cm);
\draw (7.5,2) node[petit]  (c32) {};
\draw (7.5,2.7) node[scale= 1] {$\neg x_4$};
\draw (2,-1.5) rectangle +(2,1);
\draw (2.5,-1) node[petit] (c41) {};
\draw (2.5,-1.8) node[scale= 1] {$x_3$};
\draw[red]  (2.5,-1) circle (0.3cm);
\draw (3.5,-1) node[petit] (c42) {}; 
\draw (3.5,-1.8) node[scale= 1] {$\neg x_3$};

\path[draw] (c0) -- (c11) -- (c31) -- (c2);
\path[draw] (c11) -- (c41) -- (c31);
\path[draw] (c12) -- (c42);
\path[draw] (c41) -- (c32);
\path[draw] (c42) -- (c31);

\draw (12,2) node[scale = 1]{\vbox{\begin{align*}\Phi  = &x_1 \wedge x_5 \wedge (\neg x_1 \vee x_2)\\ & \wedge (\neg x_2 \vee x_3)  \wedge (x_2 \vee \neg x_3)\\&\wedge (x_2 \vee x_4)\wedge (x_2 \vee \neg x_4)\\&\wedge (\neg x_3 \vee x_4)\wedge (\neg x_5 \vee x_1)\end{align*}}};
\end{tikzpicture}}
\end{center}
\caption{A colored graph with cluster size at most two and the associated  2-SAT formula. The set formed by the circled vertices is a multicolored realization corresponding to the satisfying assignment $x_i=1$, for $i=1,\dots,5$.\label{fig:2-sat}}
\end{figure}
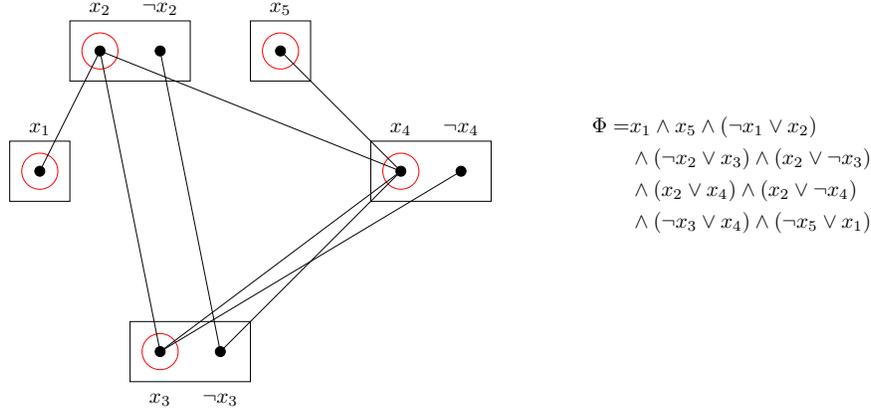

Let us analyze the case of colored graphs with cluster size $s\geq 3$. The reduction provided in the proof of Theorem \ref{thm:ConSke-NPhard} shows NP-hardness for $s=3$. To extend the reduction to a value of $s>3$, we just add to the graph constructed in this reduction a large enough set of independent vertices. Those independent vertices are colored in such a way that each color class is completed to have $s$ vertices. After the  addition of the independent vertices the cluster  graph remains the same. Furthermore, none of the added vertices can form part of a multicolored realization. Therefore, the problem is  NP-complete for $s> 3$.  Putting this together with Proposition~\ref{prop:size2}, we get a complexity dichotomy with respect to cluster size.

\begin{theorem}\label{teo:dicot}
The \MGRP problem is  NP-complete for colored graphs with cluster size $s\geq 3$, and polynomial time solvable otherwise. 
\end{theorem}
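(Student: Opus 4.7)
The plan is to prove the dichotomy by combining the already-established tractable case with a short padding argument on top of an existing hardness reduction. First, for the polynomial side, I would simply invoke Proposition~\ref{prop:size2}, which handles cluster size $s\leq 2$ via reduction to 2-SAT; the sub-case $s=1$ needs no separate treatment because when every color class is a singleton the cluster graph is isomorphic to $G$ itself (up to relabeling), so realizing it is trivial.

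For the hard side, the case $s=3$ is delivered directly by Theorem~\ref{thm:ConSke-NPhard}, whose reduction from 1-in-3 Monotone SAT already produces a colored instance whose largest color class has three vertices. To extend the hardness to any fixed $s>3$, I would start from the colored graph $(G,\vp)$ produced by that theorem's reduction and construct an augmented colored graph $(G',\vp')$ as follows: for each color class $A_i$ with $|A_i|<s$, introduce $s-|A_i|$ new isolated vertices in $G'$ and color each of them with color $i$, so that the new color class has exactly $s$ members. No edges are added, so every new vertex is isolated in $G'$.

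Two observations make the reduction work. Since no edges are added, the quotient graph is unchanged, i.e., $G'_{\vp'}=G_\vp$, so we are still asked to realize the same cluster graph. Moreover, if a multicolored realization $S$ of $G'_{\vp'}$ included a newly added isolated vertex $v$ with $[v]_{\vp'}=A_i$, then $v$ would need to be adjacent in $G'$ to a vertex of every color class $A_j$ with $(i,j)\in E(G_\vp)$; since $v$ has no neighbors in $G'$, this is possible only if node $i$ is isolated in $G_\vp$. Hence, as long as the original cluster graph has no isolated node, a realization of $(G',\vp')$ must pick all its vertices from the copy of $(G,\vp)$, so the yes/no answer is preserved and the reduction is polynomial.

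The only point that needs checking, and thus the main obstacle, is the sanity condition that the cluster graph built in Theorem~\ref{thm:ConSke-NPhard} has no isolated node; this is a quick inspection, since every variable class $X_i$ is adjacent to $X_{i\pm 1}$ and every clause class $Y_j$ is adjacent to the three variable classes of its variables. Combining Proposition~\ref{prop:size2} for $s\leq 2$ with the padded reduction for each fixed $s\geq 3$ (and the membership in NP observed in Section~\ref{sec:2}) yields the claimed dichotomy.
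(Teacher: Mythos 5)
Your proposal is correct and follows essentially the same route as the paper: invoke Proposition~\ref{prop:size2} for $s\leq 2$, use the reduction of Theorem~\ref{thm:ConSke-NPhard} for $s=3$, and pad each color class with isolated vertices to reach any fixed $s>3$, noting that the cluster graph is unchanged and the padded vertices cannot enter a realization. Your explicit justification of that last point (no isolated node in the cluster graph) is a welcome detail the paper leaves implicit, but the argument is the same.
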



Our last result is an FPT algorithm  for the \MGRP problem parametrized by the treewidth of the cluster graph  and the cluster size. Recall that we have already established that the \MGRP problem parameterized by the treewith of the cluster graph is $W[1]$-hard (see Theorem~\ref{thm:btw}) and that it is NP-complete, for $s>2$, when the cluster graph is convex bipartite. Recall that convex bibartite graph can have unbounded treewidth.  Our algorithm uses dynamic programming on the tree decomposition.  

To describe the algorithm, we assume that as usual, together with the input $(G,\varphi)$,  we are given a nice tree decomposition $(T,X,r)$ of the cluster graph $G_\varphi$.  To simplify the explanation, we change slightly the notation. For a node $v\in T$, we consider two associated graphs $G_\varphi^u$, the subgraph induced in $G_\varphi$ by the union of all the  bags in the subtree rooted at $u$ (a subgraph of $G_\varphi$), and $G_u$, the subgraph induced in $G$ by the union of all the color classes appearing in a bag in the subtree rooted at $u$  (a subgraph of $G$). Observe that by definition $(G_u)_\varphi=G_\varphi^u $.  

To describe the elements in a bag, as each element in a bag corresponds to a color class, we refer directly to the color class.  

Given a collection of color classes $\mathcal{A}=\{A_1, \dots , A_k\}$, and a set $S\subseteq V$, we say that $S$ is \emph{multicolored with respect to}  $\mathcal{A}$ when $S$ contains exactly one vertex from each color class in $\mathcal{A}$.

\begin{theorem}\label{thm:Btw-FPT}
The  {\MGRP} problem when parameterized by the treewidth of  the cluster graph and the cluster size belongs to FPT. 
\end{theorem}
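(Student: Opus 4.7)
The plan is to design a standard bottom-up dynamic programming algorithm over a nice tree decomposition of the cluster graph $G_\vp$. Write $w$ for the treewidth of $G_\vp$ and $s$ for the cluster size. As a preliminary step I would, following \cite{CyganMbook2015}, compute in FPT time a nice tree decomposition $(T,X,r)$ of $G_\vp$ of width at most $w$ with $O(w\cdot |V(G_\vp)|)$ nodes.

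At each node $u\in V(T)$, the bag $X_u$ contains at most $w+1$ color classes, and the natural local state is a choice of one vertex per class in $X_u$; there are at most $s^{w+1}$ such choices. Formally, I would index a table $M_u$ by maps $f:X_u\to V(G)$ with $f(A)\in A$ for each $A\in X_u$, and set $M_u(f)=1$ iff there exists a set $S$ that is multicolored with respect to the collection $V_u$ of color classes appearing in the subtree rooted at $u$, such that $S\cap A=\{f(A)\}$ for every $A\in X_u$ and the induced subgraph $G[S]$ is a realization of $(G_u)_\vp=G_\vp[V_u]$.

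The four transitions are the usual ones. For a \emph{start} node with $X_u=\{A\}$, set $M_u(f)=1$ for every $f(A)\in A$. For an \emph{introduce} node with $X_u=X_v\cup\{A\}$, set $M_u(f)=1$ iff $M_v(f|_{X_v})=1$ and, for every $A'\in X_v$ with $(A,A')\in E(G_\vp)$, one has $(f(A),f(A'))\in E(G)$; pairs $(A,A')\notin E(G_\vp)$ need no check because by the definition of the cluster graph no edges of $G$ run between such classes. For a \emph{forget} node with $X_u=X_v\setminus\{A\}$, set $M_u(f)=1$ iff there is some $x\in A$ with $M_v(f\cup\{A\mapsto x\})=1$. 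For a \emph{join} node with $X_u=X_v=X_w$, set $M_u(f)=1$ iff $M_v(f)=M_w(f)=1$. The answer is \emph{yes} iff $M_r(f)=1$ for some $f$.

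Correctness rests on standard tree-decomposition facts: every edge of $G_\vp$ sits inside some bag, so any required adjacency in $G$ is verified at the introduce node where the second endpoint joins the bag, before either endpoint is forgotten; and at a join node the two subtrees share precisely the bag, so their already-chosen vertices for the forgotten classes live in disjoint regions of $V(G_\vp)$ and compose without conflict into a single multicolored realization. The one subtle point I would verify carefully is exactly this disjointness-at-joins argument, together with the observation that the multicolored condition is preserved because each color class occupies a connected (hence once-introduced, once-forgotten) subtree of $T$. For the running time, each of the $O(w\cdot |V(G_\vp)|)$ bags carries at most $s^{w+1}$ entries, and every introduce/forget/join transition processes a single entry in time polynomial in $s$, $w$, and $|V(G)|$; hence the total cost is $s^{O(w)}\cdot |V(G)|^{O(1)}$, which is FPT in the combined parameter $(s,w)$.
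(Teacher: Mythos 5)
Your proposal is correct and follows essentially the same route as the paper: a bottom-up dynamic program over a nice tree decomposition of $G_\vp$ whose states at a node are the choices of one vertex per color class in the bag, with the standard start/introduce/forget/join transitions and the same $s^{O(w)}\cdot n^{O(1)}$ bound. Your write-up is in fact slightly more careful on two points the paper glosses over or mistypes, namely the disjunction (not conjunction) over the forgotten class's vertices at a forget node and the observation that non-edges of $G_\vp$ need no check because, by definition of the cluster graph, no $G$-edges run between non-adjacent color classes.
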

\begin{proof}
Let $(G,\vp)$ be a colored graph with cluster size $s$, and 
let $(T,X,r)$ be a a nice tree decomposition of the cluster graph $G_\varphi$ with width $w$.  The dynamic programming  algorithm will fill, for each node $v\in V(T)$, a boolean table
${M_v(S)}$ having  an entry for each multicolored subset $S$ with respect to the color classes included in  $X_v$. At the end of the algorithm,  ${M_v(S)}=1$ if there is a multicolored set $S'$ in $G_v$ realizing $G_\varphi^v$ such that $S\subseteq S'$, and otherwise ${M_v(S)}=0$.  
If so,  the value of $M_r(\emptyset)$ will determine whether there is a multicolored realization of $G$ or not. 
We deal with the table computation for each type of node in the nice tree decomposition separately, as each type of node requires a different kind of recursion and a different correctness guarantee.

\paragraph{Start node}
Let $u$ be start node of $T$, that is  a leaf,  with $X_u = \{A\}$ for some color class $A$ in $G_\varphi$. The multicolored subsets are formed by just one vertex in $C$. We set 
$M_u(\{x\}) = 1$, for $x\in A$,  and $M_u(\emptyset) = 0$.
As the graph $G_\varphi^u$ is an isolated vertex, the computed values are correct.

\paragraph{Introduce node}
Let $u$ be an {\em introduce} node of $T$, let $v$ be its unique  child and assume that  $A$ is the unique color class in $X_{u}-X_{v}$.

Then, for each $x\in A$ and each multicolored set $S$ with respect to  $X_v$,  if $G[\{x\}\cup S]$ is a realization of $G_\varphi^u[X_v]$ and  $M_v(S)=1$, we set $M_v(\{x\}\cup S)=1$, otherwise we set the value to 0.  

Note that all the multicolored sets with respect to $X_u$ are formed by a vertex in $A$ and a multicolored set $S$ with respect to $X_v$. Furthermore, $G_\varphi^v$ does not include the vertex  $X$.  For a multicolored set $S'$ of   $G_\varphi^u$ let $x\in A\cup S'$ and let $S$ be formed by the vertices in $S'$ belonging to the color classes in $X_v$.  Then $G[S']$ is  a realization of $G_\varphi^u$ if and only if   $G[\{x\}\cup S]$ is a realization of $G_\varphi[X_u]$ and $M_v(S)=1$.

\paragraph{Forget node}
Let $u$ be a {\em forget} node of $T$,  let  $v$ be  its 
unique child and assume that,  according to the definition, $C$ is the unique color class in  $X_{v}-X_{u}$.
Then, for  each $x\in A$ and each multicolored subsets $S$ with respect to $X_v$, we define $M_u(S) = \wedge_{x\in A} M_v (S\cup \{x\})\}$.
This expression provides the  correct value, as we are considering  all the possible multicolored supersets of $S$ with respect to $X_v$, if one of them is extendable to a multicolored realization, then $S$ is also extendable.

\paragraph{Join node}
Let $u$ be a join node of $T$ with children $v$ and $w$. In this case we have that $X_u=X_v= X_w$

Then,  for each multicolored subsets $S$ with respect to  $ X_u$, we set  $M_u(S) = M_u (S) \wedge M_w(S)$. This formula provides the correct value, as for the set $S$ to be extendable to a multicolored realization in $G_u$, $S$ must be extendable to a multicolored realization in  both $G_v$ and $G_w$.

\paragraph{Complexity}
The size of the tables associated to a node is upperbounded  by $s^w$, as we have to select on vertex from each color class with at most $s$ vertices  and the number of classes in a bag is at most $w$. 
To compute the entries the most complex operation is  a forget node in which we have to look at all the elements in a color class. This is number is upperbounded by $n$.  
As the total number of nodes is polynomial in the number of color classes, the total cost is $O(s^w p(n))$.  This function shows that the problem is fixed parameter tractable.  
\end{proof}

\section{Conclusions and further results}\label{sec:6}

We have introduced a new generalized  graph problem in order to assess the viability of the  associated cluster graph in terms of a possible existing realization. We have studied the complexity of the problem in the parameterized framework. Our results shed light on  the hardness of the  problem with respect to several parameters.

We can consider also a variant of the \MGRP problem in which, instead of asking for a multicolored realization of the cluster graph, we are interested in a multicolored realization of a given spanning subgraph:
\begin{quote}
\emph{Multicolored subgraph realization} problem (\MsGRP)\\
Given a graph $G=(V,E)$ together with a coloring $\varphi$, does there exists a  multicolored realization of $H=(V(G_\vp), E')$, for $E'\subseteq E(G_\vp)$?  
\end{quote}
This version of the problem captures another well known problem the multicolored independent set problem which is also known to be $W[1]$ hard parameterized by the number of colors (see for example \cite{CyganMbook2015} ). 
Observe that the \MsGRP problems includes, as a particular case, the \MGRP problem. Therefore, all the hardness results provided in this paper hold for the \MsGRP problem. 

Note that in the \MGRP problem, when an edge is not present in $G_\vp$,  none of the vertices in the corresponding color classes are connected. This is not always the case in the \MsGRP problem, an edge that is not present in the target graph $H$ might appear  in $G_\vp$.  When  $(A,B)\in E(G_\vp)$ but $(A,B)\notin E(H)$, a realization of $H$ must select two not connected vertices, one from $A$ and another from $B$. So, a necessary condition for the existence of  a multicolored realization of $H$ is that the bipartite graph $G[A,B]$ connecting the vertices in $A$ with the vertices in $B$ is not a complete bipartite graph.  If $G[A,B]\equiv K_{|A|,|B|}$ and $H$ does not contain the edge $(A,B)$,  we know that no multicolored realization exists. Otherwise,  we can assume that,  for each  $(A,B)\in E(G_\vp)$ with $(A,B)\notin E(H)$, we have $E(A,B)=\{(u,v)\in V(G)\mid u\in A, v\in B\}\subsetneq A\times B$. Under this assumption, we can consider the graph $G'$ in which, for each  $(A,B)\in E(G_\vp)$ with $(A,B)\notin E(H)$, we remove from $E(G)$ the edges in $E(A,B)$ and add the edges $A\times B \setminus E(A,B)$. Maintaining the same coloring, we have that $G'_\vp = G_\vp$ and that $H$ has a multicolored realization if and only if there is a multicolored realization of  $G'_\vp$.  Furthermore, $G'_\vp$ can be constructed in polynomial time in the size of $G$. In this way we obtain a polynomial time reduction from the \MsGRP problem to the \MGRP that preserves all the parameters considered in this paper. In consequence, all the positive results (polynomial time or FPT algorithms) devised for the \MGRP problem also hold for the \MsGRP problem.

Recall that a homomorphism from a graph $G = (V, E)$ to a graph $H = (V', F)$ is a function $f$ from $X$ to $Y$ such that, for each $(x,y)\in E$, $(f(x),f(y))\in F$. If $S\subseteq V$ and  $f$ is a homomorphism from $G$ to $G[S]$,  $f$ is a
\emph{retraction with respect to $S$}  if, for $x\in S$, $f(x)= x$ \cite{Quilliot1985}.  Inspired in this notion we consider the following problem.

\begin{quote}
\emph{Composed  retraction} problem (\CR)\\
Given  graphs $G=(V,E)$  and  $H = (V', F)$ together with a homomorphism  $f$ from $G$ to $H$, is there a homomorphism $g$ from $H$ to $G$ such that, for $x\in V'$,    $f(g(x))=x$?  
\end{quote}
Note that, for colored graphs $(G,\vp)$ in which the coloring is proper, $\vp$ is a homomorphism from  $G$ to $G_\vp$.  Furthermore, if $S$ is a multicolored realization of $G_\vp$, then the function $g$ assigning to each vertex in $G_\vp$ the corresponding colored vertex in $S$, is a homomorphism from $G_\vp$ to $G$ that verifies $\vp(g(x))=x$.
On the other  hand, if there is  a homomorphism $g$ from $G_\vp$ to $G$ such that, for $x\in V(G_\vp)$,    $\vp(g(x))=x$, then $g(V(G_\vp))$ is a multicolored realization of $G_\vp$. Therefore, the \MGRP problem, when the coloring is proper, is a subproblem of the \CR problem. Taking into account that  the coloring used in the reductions in this paper are proper, all the hardness results provided in this paper hold for the \CR problem.  It remains open to find other parameterizatins under which the \CR problem becomes tractable. 

\section*{Acknowledgments}

J. D\'{\i}az and  M. Serna are partially supported by funds from MINECO and EU FEDER under 
  grant TIN 2017-86727-C2-1-R AGAUR project ALBCOM 2017-SGR-786. 
\"O. Y. Diner is partially supported by the Scientific and Technological Research Council T\"ubitak under project BIDEB 2219-1059B191802095 and by Kadir Has University under project 2018-BAP-08. O. Serra is supported by the Spanish Ministry of Science under project MTM2017-82166-P.


\begin{thebibliography}{10}
\expandafter\ifx\csname url\endcsname\relax
  \def\url#1{\texttt{#1}}\fi
\expandafter\ifx\csname urlprefix\endcsname\relax\def\urlprefix{URL }\fi
\expandafter\ifx\csname href\endcsname\relax
  \def\href#1#2{#2} \def\path#1{#1}\fi

\bibitem{Vepstas17}
L.~Vep\u{s}tas, Graph quotiens: a topological approach to graphs, Bulletin of
  the Novosibirsk Computing Center, Computer Science 41 (2017) 55--89.

\bibitem{Fisch95}
M.~Fischetti, J.~J. Salazar~Gonz\'{a}lez, P.~Toth, The symmetric generalized
  traveling salesman polytope, Networks 26~(2) (1995) 113--123.

\bibitem{Fisch97}
M.~Fischetti, J.~J. Salazar~Gonz\'{a}lez, P.~Toth, A branch-and-cut algorithm
  for the symmetric generalized traveling salesman problem, Oper. Res. 45~(3)
  (1997) 378--394.
\newblock \href {https://doi.org/10.1287/opre.45.3.378}
  {\path{doi:10.1287/opre.45.3.378}}.

\bibitem{Fisch02}
M.~Fischetti, J.~Gonz\'{a}lez, P.~Toth, The generalized traveling salesman and
  orienteering problems, Kluwer, Dordrecht, 2002.

\bibitem{GhianiI2000}
G.~Ghiani, G.~Improta, An efficient transformation of the generalized vehicle
  routing problem, European Journal of Operational Research 122~(1) (2000)
  11--17.

\bibitem{PopMSP2018}
P.~Pop, O.~Matei, C.~Sabo, A.~Petrovan, A two-level solution approach for
  solving the generalized minimum spanning tree problem, European Journal of
  Operational Research 265~(2) (2018) 478--487.

\bibitem{DemangeMPR2014}
M.~Demange, J.~Monnot, P.~Pop, B.~Ries, On the complexity of the selective
  graph coloring problem in some special classes of graphs, Theoretical
  Computer Science 540-541 (2014) 82--102.

\bibitem{DemangeERT2015}
M.~Demange, T.~Ekim, B.~Ries, C.~Tanasescu, On some applications of the
  selective graph coloring problem, European Journal of Operational Research
  240 (2015) 307--314.

\bibitem{DrorHC2000}
M.~Dror, H.~Haouari, J.~Chaouachi, Generalized spanning trees, European Journal
  of Operations Research 120 (2000) 583--592.

\bibitem{FeremansLL2003}
C.~Feremans, M.~Labbe, G.~Laporte, Generalized network design problems,
  European Journal of Operations Research 148~(1) (2003) 1--13.

\bibitem{Pop12}
P.~Pop, Generalized Network Design Problems: Modeling and Optimization, De
  Gruyter, Germany, 2012.
\newblock \href {https://doi.org/10.1515/9783110267686}
  {\path{doi:10.1515/9783110267686}}.

\bibitem{pietrzak2003}
K.~Pietrzak, On the parameterized complexity of the fixed alphabet shortest
  common supersequence and longest common subsequence problems, Journal of
  Computer and System Sciences 67~(4) (2003) 757--771, parameterized
  Computation and Complexity 2003.
\newblock \href {https://doi.org/10.1016/S0022-0000(03)00078-3}
  {\path{doi:10.1016/S0022-0000(03)00078-3}}.

\bibitem{FellowsTCS2009}
M.~R. Fellows, D.~Hermelin, F.~Rosamond, S.~Vialette, On the parameterized
  complexity of multiple-interval graph problems, Theoretical Computer Science
  410~(1) (2009) 53--61.
\newblock \href {https://doi.org/10.1016/j.tcs.2008.09.065}
  {\path{doi:10.1016/j.tcs.2008.09.065}}.

\bibitem{Gavril1972}
F.~Gavril, Algorithms for minimum coloring, maximum clique, minimum covering by
  cliques and maximum independent set of a chordal graph, {SIAM} Journal on
  Computing 1 (1972) 180--187.

\bibitem{enright2014}
J.~Enright, T.~Stewart, G.~Tardos, On list coloring and list homomorphism of
  permutation and interval graphs, {SIAM} Journal on Discrete Mathematics
  28~(4) (2014) 1675--1685.

\bibitem{HuangJP2015}
S.~Huang, J.~H., D.~Paulusma, Narrowing the complexity gap for coloring
  $(c_s,p_t )$-free graphs, The Computer Journal~(11) (2015) 3074--3088.

\bibitem{DiazDSS21}
J.~D\'{i}az, O.~Y. Diner, M.~Serna, O.~Serra, On list $k$-coloring convex
  bipartite graphs, in: C.~Gentile, G.~Stecca, P.~Ventura (Eds.), Graphs and
  Combinatorial Optimization: from Theory to Applications CTW2020 Proceedings,
  Vol.~5 of Airo, Sringer, 2021, pp. 15--26.

\bibitem{diestel}
R.~Diestel, Graph Theory, Vol. 173 of Graduate Texts in Mathematics,
  Heidelberg: Springer-Verlag, 2017.

\bibitem{nussbaum2010}
D.~Nussbaum, S.~Pu, J.~Sack, T.~Uno, H.~Zarrabi-Zadeh, Finding maximum edge
  bicliques in convex bipartite graphs, Algorithmica 64~(2) (2010) 140--149.

\bibitem{CyganMbook2015}
M.~Cygan, F.~V. Fomin, L.~Kowalik, D.~Lokshtanov, D.~Marx, M.~Pilipczuk,
  M.~Pilipczuk, S.~Saurabh, Parameterized Algorithms, 1st Edition, Springer
  Publishing Company, Incorporated, 2015.

\bibitem{FlumG2006}
J.~Flum, M.~Grohe, Parameterized Complexity Theory, Texts in Theoretical
  Computer Science. An EATCS Series, Springer-Verlag Berlin Heidelberg, 2006.

\bibitem{DowneyF95-I}
R.~G. Downey, M.~R. Fellows, Fixed-parameter tractability and completeness.
  {I}. {B}asic results., SIAM Journal on Computing 24~(4) (1995) 873--921.

\bibitem{DowneyF95-II}
R.~G. Downey, M.~R. Fellows, Fixed-parameter tractability and completeness
  {II}: {O}n completeness for ${W}[1]$., Theoretical Computer Science
  141~(1--2) (1995) 109--131.

\bibitem{schaefer1978}
T.~Schaefer, The complexity of satisfiability problems, in: Proceedings of the
  10th Annual ACM Symposium on Theory of Computing, 1978, pp. 216--226.

\bibitem{spinrad1987}
J.~P. Spinrad, A.~Brandst\"{a}dt, L.~Stewart, Bipartite permutation graphs,
  Discrete Applied Mathematics 18 (1987) 279--292.

\bibitem{Garey79}
M.~R. Garey, D.~S. Johnson, Computers and intractability. {A} guide to the
  theory of {NP}-completeness, Freeman and Company, 1979.

\bibitem{Quilliot1985}
A.~Quilliot, A retraction problem in graph theory, Discrete Mathematics 54
  (1985) 61--72.

\end{thebibliography}

\end{document}